\documentclass[acmsmall]{acmart}

\usepackage{tcolorbox}
\usepackage{xcolor}
\usepackage{amsmath}

\usepackage{enumitem}

\usepackage{setspace}
\usepackage{algpseudocode}
\usepackage{xcolor}
\usepackage{booktabs}
\usepackage{multirow}
\usepackage{pifont}
\usepackage{pgfplots}
\usepackage{makecell}
\usepackage{wrapfig}
\usepackage{overpic}
\usepackage{subcaption}

\usetikzlibrary{positioning, arrows.meta, calc, shadows}
\usepackage{tikz}
\usepackage{listings}

\usepackage{stmaryrd}
\usepackage{amsthm}

\usetikzlibrary{patterns}

\usepackage{graphicx}
\usepackage{multicol}
\usepackage{breqn}
\usepackage{siunitx}
\usepackage[utf8]{inputenc}  
\usepackage[T1]{fontenc}     
\usepackage{microtype}
\microtypesetup{protrusion=false,expansion=true}
\usepackage[varqu]{zi4}

\usepackage{stmaryrd}

\usepackage{algorithm}

\usepackage{proof}

\newcommand{\highlight}[1]{\textcolor{black}{#1}}
\newtheorem{lemma}{Lemma}[section]

\tcbuselibrary{listings, skins}
\newtcblisting{scenicbox}[1]{
	colback=gray!5,      
	colframe=gray!40,    
	listing only,
	listing options={
		basicstyle=\ttfamily\footnotesize, 
		keywordstyle=\color{blue!80!black}\bfseries,
		commentstyle=\color{gray}\itshape,
		stringstyle=\color{green!60!black},
		numberstyle=\tiny\color{gray},
		breaklines=true,
		showstringspaces=false,
		tabsize=2,
		numbers=left,        
		numbersep=2pt,       
		xleftmargin=0pt,     
	},
	arc=2pt,             
	boxrule=0.5pt, 
	left=2pt, right=2pt, top=2pt, bottom=2pt, 
	enhanced, 
	#1
}

\newtheoremstyle{example_style} 
{3pt}  
{3pt} 
{\normalfont} 
{}     
{\itshape} 
{.}     
{ }    
{\thmname{#1}~\thmnumber{#2}} 

\theoremstyle{example_style} 
\floatname{algorithm}{Algorithm}

\makeatletter
\renewcommand{\ALG@beginalgorithmic}{\normalsize}
\algtext*{EndIf} 
\algtext*{EndFor} 
\algtext*{EndWhile} 
\makeatother

\newtheoremstyle{theory_style} 
{3pt}   
{3pt}   
{\normalfont} 
{}      
{\itshape} 
{.}    
{ }    
{\thmname{#1}~\thmnumber{#2}} 

\theoremstyle{theory_style}

\AtBeginDocument{%
	}

\setcopyright{cc}
\setcctype{by-nc-nd}
\acmDOI{10.1145/3839527}
\acmYear{2026}
\acmJournal{PACMPL}
\acmVolume{10}
\acmNumber{OOPSLA2}
\acmArticle{395}
\acmMonth{10}
\acmSubmissionID{oopslab26main-p1287-p}
\received{2026-03-17}
\received[accepted]{2026-08-06}
\begin{document}
	
		\title{Real-to-Sim Generation: Synthesizing Scenario Programs from Real-World Data via Constraint Solving}
	\author{Peishan Huang}
	\orcid{0009-0003-9800-3700}
	\affiliation{%
		\department{College of Computer Science and Technology}
		\institution{National University of Defense Technology}
		\city{Changsha}
		\state{Hunan}
		\country{China}
	}
	\email{huang\_ps@nudt.edu.cn}
	\authornote{Also with the affiliation: State Key Laboratory of Complex \& Critical Software Environment, National University of Defense Technology, Changsha, China.}
	
	\author{Wenmeng Zhang}
	\orcid{0009-0009-0646-6219}
	\affiliation{%
		\department{College of Computer Science and Technology}
		\institution{National University of Defense Technology}
		\city{Changsha}
		\state{Hunan}
		\country{China}
	}
	\email{wenmengzhang@nudt.edu.cn}
	\authornotemark[1]
	
	\author{Yusen Chen}
	\orcid{0009-0005-8839-4669}
	\affiliation{%
		\department{College of Computer Science and Technology}
		\institution{National University of Defense Technology}
		\city{Changsha}
		\state{Hunan}
		\country{China}
	}
	\email{chenyushen21@nudt.edu.cn}
	\authornotemark[1]

	\author{Zhenbang Chen}
	\orcid{0000-0002-4066-7892}
	\affiliation{%
		\department{College of Computer Science and Technology}
		\institution{National University of Defense Technology}
		\city{Changsha}
		\state{Hunan}
		\country{China}
	}
	\email{zbchen@nudt.edu.cn}
	\authornotemark[1]
	\authornote{Zhenbang Chen is the corresponding author.}
	
	\renewcommand{\shortauthors}{Peishan Huang, Wenmeng Zhang, Yusen Chen, Zhenbang Chen}
	\newcommand{\methodname}{\textsc{R2SGen}}
	\newcommand{\succRate}{\textsc{96.3\% }}
	
	\begin{abstract}
		The demand for synthetic training data is hindered by the sim-to-real gap, as current data-driven and LLM-based generators often produce physically implausible scenarios. To address this, we propose R2SGEN, a Real-to-Sim framework that synthesizes structured scenario programs from real-world data. To overcome the combinatorial explosion and intractability of monolithic Satisfiability Modulo Theories (SMT) encoding, we introduce a decoupled synthesis strategy. This approach separates the discrete structural program search from continuous geometric resolution using lightweight, atomic SMT constraints. Furthermore, we significantly accelerate the search process by integrating two tailored pruning mechanisms: Common Prefix Abstraction-based pruning for Breadth-First Search and Branch-and-Bound for Depth-First Search. We evaluate R2SGEN on 20 real-world scenes of varying complexity from the nuScenes dataset. 
		\highlight{Experimental results show that our method guarantees
			consistency with the input scene and produces substantially lower-cost
			programs than the LLM-based baselines under the evaluated inputs.}
		Both proposed search paradigms exhibit complementary advantages, proving highly efficient and scalable for high-complexity synthetic data generation.
	\end{abstract}

\begin{CCSXML}
		<ccs2012>
		<concept>
		<concept_id>10011007.10011074.10011092.10011782</concept_id>
		<concept_desc>Software and its engineering~Automatic programming</concept_desc>
		<concept_significance>500</concept_significance>
		</concept>
		</ccs2012>
\end{CCSXML}
	
	\ccsdesc[500]{Software and its engineering~Automatic programming}

	\keywords{Program Synthesis, Scenario Generation, Scenic, SMT}
	\maketitle
	 

\section{Introduction}
Modern autonomous-driving systems require large volumes of visual data for 
training, yet collecting such data in the real world is costly and difficult to 
scale \cite{nuscenes,Waymo}. Synthetic data therefore provides a practical 
alternative for expanding training datasets \cite{Carla,lgsvl}.
Currently, image data generation primarily relies on neural network-based 
methods. However, these approaches exhibit significant limitations: their 
"black-box" nature lacks interpretability during the generation process, making 
it difficult to achieve fine-grained, customized control over specific semantic 
elements within a scene. In contrast, Scenic \cite{Scenic}, a domain-specific 
language (DSL) tailored for scenario generation, demonstrates remarkable 
advantages in constructing complex driving scenarios due to its programmability 
and determinism.

Despite its powerful capabilities, Scenic's nature as a DSL requires developers to possess deep domain expertise and manually write code, a process that is not only inefficient but also incurs high labor costs. 
Recently, some studies \cite{scenicNL,ChatScene} have attempted to leverage Large Language Models (LLMs) to automatically synthesize Scenic programs via natural language prompts. 
Nevertheless, constrained by the inherent hallucination issues of LLMs and the ambiguity of natural language descriptions, existing methods struggle to guarantee the correctness of the generated programs, let alone formally verify whether the synthesized Scenic programs strictly conform to the task specifications.
Consequently, there is an urgent need to develop a novel technical framework. This framework should automatically synthesize scenario programs that are consistent with real-world scene data. Furthermore, it must provide rigorous formal verification capabilities to ensure that the semantics of the finally generated code conform to the task specifications.

Our key insight is to formulate scenario generation as a Programming by Example problem. Given a concrete scenario from real-world data, comprising precise positions, headings, and types of all entities, our goal is to automatically synthesize a scenario program. This synthesized program must satisfy two requirements: first, it should describe the input scenario, meaning the program's semantics must encompass the given concrete configuration; second, it should generalize beyond the single input example, enabling the generation of diverse scenario variants.
We therefore propose \methodname\ (Real-to-Sim Generation), an automated
framework for scenario program synthesis from concrete scenes shown in Fig.
\ref{fig:OVERVIEW}. Our framework takes as input \highlight{a structured
concrete scene, \textit{i.e.}, a set of entities annotated with their types,
positions, and headings, as commonly provided by public datasets
(\textit{e.g.}, nuScenes) or by upstream perception systems or vision-language
models (VLMs)}, and produces as output a structured scenario program in Scenic 
that abstracts the concrete coordinates into reusable spatial relations and 
parameterized ranges, which can then be rendered into diverse synthetic images 
for training autonomous driving systems. However, realizing this synthesis 
poses two key challenges.

\begin{figure}[t]
	\centering
	\begin{minipage}[t]{0.95\textwidth}
		\centering
		\includegraphics[width=\textwidth]{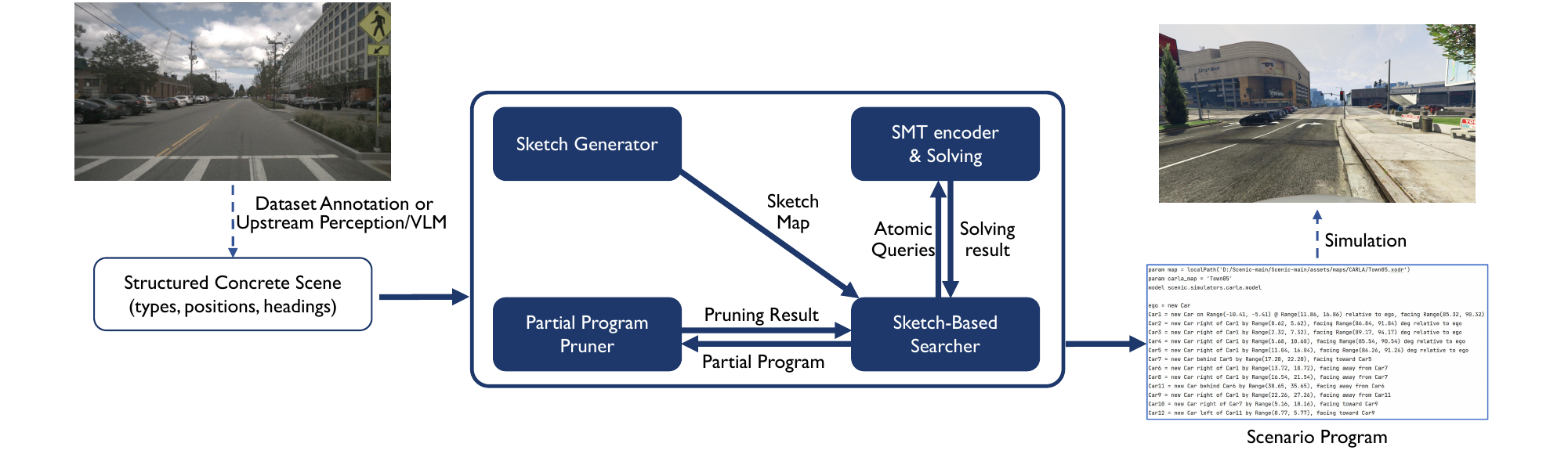}
		\vspace{-25pt}
		\caption{\highlight{Overview of R2SGen. Scene images are from the nuScenes dataset~\cite{nuscenes}, \copyright~Motional AD Inc., licensed under CC BY-NC-SA 4.0.}}
		\label{fig:OVERVIEW}
	\end{minipage}
\end{figure}

The first challenge is verifying program correctness. The DSL describes scenarios using spatial relationships such as "ahead of", but the input is given as exact numeric coordinates. Determining whether a candidate program correctly captures the input requires automatically reasoning about whether these relational descriptions are satisfied by the concrete coordinates.
For instance, given the traffic snapshot in Fig. \ref{fig:vis_input_a}, one could trivially generate programs using absolute coordinates or numerical offsets (top and middle of Fig. \ref{fig:prog_combined_b}).
 However, these approaches simply encode the specific numerical values without 
 capturing the underlying spatial relationships, such as "car3 is ahead of 
 ego". Such rigid encodings are scene-specific and cannot generalize: a program 
 encoding "car3 at (0, 5)" works only for that exact coordinate, whereas a 
 program encoding "car3 ahead of ego by Range(3,8)" captures a reusable spatial 
 pattern that can generate variations. Our goal is to synthesize programs like 
 the bottom example in Fig. \ref{fig:prog_combined_b}, which abstracts the 
 scene into relational primitives (\textit{e.g.}, \textit{ahead of} and 
 \textit{right of}) and parameterized ranges, enabling the program to generate 
 structurally similar scenarios beyond the single observed instance.
We draw inspiration from prior work \cite{Querying}, which encodes 
scenario programs into SMT constraints to validate their semantic alignment 
with real-world data. While their approach focuses on querying data instances 
that match a specific pre-defined program, we invert this paradigm to 
synthesize a scenario program directly from real-world data. We encode the 
geometric semantics of the DSL as satisfiability constraints and leverage SMT 
solvers to verify program consistency during the synthesis process. Since our 
synthesized programs are grounded in reality by design, they are inherently 
realistic, effectively mitigating the sim-to-real gap without requiring 
post-hoc validation.  

\begin{figure}[!t]
	\centering	
	\begin{subfigure}[b]{0.46\textwidth}
		\centering
		\includegraphics[width=\textwidth]{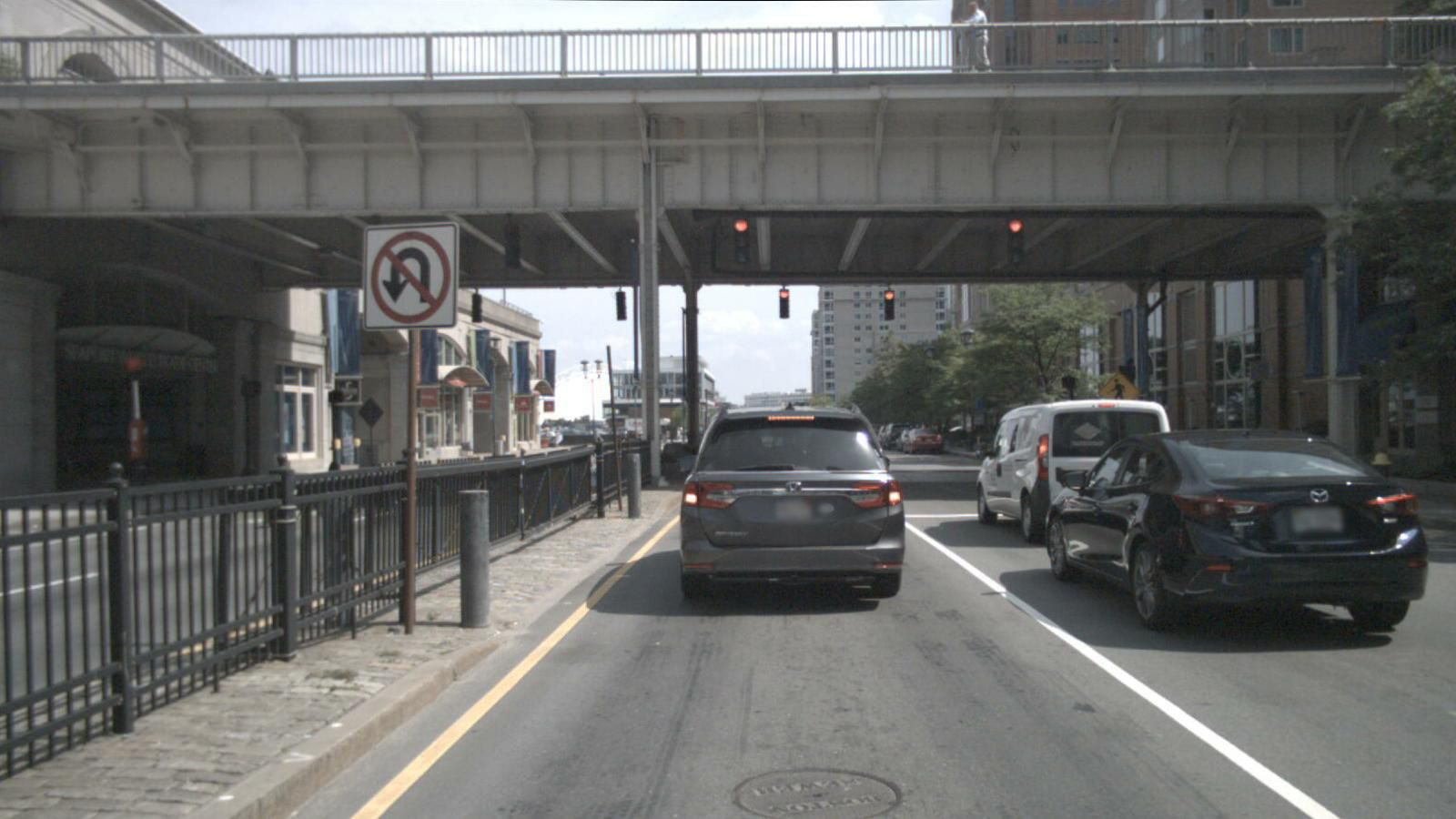}
		
		\vspace{10pt} 
		\begin{flushleft}
			\small
			\textbf{Hypothetical semantic information:} \\
			\vspace{4pt}
			\begin{tabular}{@{}ll@{}}
				\quad \ ego: $(0, 0, 0^\circ)$ &\quad car2: $(3, 5, 0^\circ)$ \\
				\quad \ car1: $(3, 8, 0^\circ)$ &\quad car3: $(0, 5, 0^\circ)$
			\end{tabular}
		\end{flushleft}
		
		\vspace{12pt} 
		\vspace{-3pt}
		\caption{Real-world scene image from the nuScenes 
			dataset~\cite{nuscenes} (\copyright~Motional AD Inc., licensed 
			under CC 
			BY-NC-SA 4.0) and the assumed corresponding semantic information.}
		\label{fig:vis_input_a}
	\end{subfigure}
	\hfill
	\begin{subfigure}[b]{0.50\textwidth}
\begin{lstlisting}[
		basicstyle=\ttfamily\fontsize{7.5}{9}\selectfont,
		frame=tb,
		xleftmargin=0pt,
		breaklines=false,
		commentstyle=\itshape\color{darkgray},
		escapechar=|,
		keepspaces=true,
		tabsize=4
		]
## Program with absolute coordinates
car1= new Car on |\textcolor{red}{3@8}|, facing |\textcolor{red}{0 deg}|
car2= new Car on |\textcolor{red}{3@5}|, facing |\textcolor{red}{0 deg}|
car3= new Car on |\textcolor{red}{0@5}|, facing |\textcolor{red}{0 deg}|

## Program with relative numerical offsets
car1= new Car on |\textcolor{orange}{3@8 relative to}| ego,
	facing |\textcolor{orange}{0 deg relative to}| ego
car2= new Car on |\textcolor{orange}{0@-3 relative to}| car1,
	facing |\textcolor{orange}{0 deg relative to}| car1
car3= new Car on |\textcolor{orange}{-3@0 relative to}| car2,
	facing |\textcolor{orange}{0 deg relative to}| car2

## Program capturing the scene layout
car3= new Car |\textbf{\textcolor{blue}{ahead of}}| ego by |\textbf{\textcolor{teal}{Range(3,8)}}|,
	facing |\textbf{\textcolor{blue}{away from}}| ego
car2= new Car |\textbf{\textcolor{blue}{right of}}| car3 by |\textbf{\textcolor{teal}{Range(1,6)}}|,
	facing |\textbf{\textcolor{teal}{Range(-5,5)}}| deg relative to ego
car1= new Car |\textbf{\textcolor{blue}{ahead of}}| car2 by |\textbf{\textcolor{teal}{Range(1,6)}}|,
	facing |\textbf{\textcolor{blue}{away from}}| car2
\end{lstlisting}
		\vspace{-3pt}
		\caption{Valid scenario programs describing the given 
			\highlight{structured scene}.}
		\label{fig:prog_combined_b}
	\end{subfigure}
	
	\vspace{-8pt}
	\caption{Comparison of spatial representations for a given scene. 
		(\textbf{a}) A real-world scene image from the ego vehicle's 
		perspective 
		(top) and the assumed corresponding semantic information (bottom). 
		(\textbf{b}) Three variations of valid scenario programs describing the 
		given scene: progressing from rigid absolute coordinates and relative 
		numerical offsets, to the optimal target program that captures the 
		inherent 
		relational structure.}
	\label{fig:synthesis_comparison}
\end{figure}

The second challenge is the vast combinatorial search space. Synthesizing scenario programs that capture relational structures faces a severe combinatorial explosion. A fundamental constraint is the dependency order: the position and orientation of any entity must be defined relative to the global frame or an already declared entity. Therefore, beyond merely solving for specific geometric relationships (\textit{e.g.}, $ahead\ of$ or $facing\ toward$), we must also determine which entity serves as the reference object. Consequently, based solely on the permutations of entity declarations, the search space complexity scales factorially ($O(N!)$).
Finding the optimal program in this vast space becomes computationally intractable as the number of entities increases. While one might intuitively encode the entire problem into a single, monolithic SMT formula, our empirical results show that the complexity of this joint discrete-continuous space causes solvers to fail even on simple three-entity scenes.
To overcome this, we decouple the synthesis process, fundamentally separating 
its discrete and continuous dimensions. We delegate discrete structural 
decisions, such as declaration ordering and relational operators, to an 
explicit search algorithm, while utilizing the SMT solver strictly as a 
geometric oracle for continuous constraints. By incrementally constructing the 
program via lightweight, atomic SMT queries, this decoupled approach makes the 
problem navigable, although the combinatorial space remains large.
To mitigate this combinatorial explosion, we introduce two tailored pruning strategies. First, \textbf{Common Prefix Abstraction-based Pruning}, primarily for \textbf{Breadth-First Search (BFS)}, merges equivalent states from distinct declaration orders to eliminate redundant branches. Second, \textbf{Branch-and-Bound Pruning}, exclusively for \textbf{Depth-First Search (DFS)}, \highlight{discards a partial branch when its accumulated cost is no lower than the cost of the best complete program found so far.}

We have implemented \methodname\ in a prototype tool and evaluated it on 20 real-world scenes of varying complexity sourced from the nuScenes \cite{nuscenes} dataset. 
\highlight{Under the evaluated inputs,
	\methodname\ achieves complete syntax and semantic correctness and
	synthesizes substantially lower-cost programs under our cost function,
	whereas the semantically correct baseline outputs primarily encode
	absolute coordinates.}
Furthermore, the efficiency evaluation validates the absolute necessity of our decoupled architecture and pruning strategies. Monolithic SMT encodings and naive combinatorial searches are fundamentally intractable, failing even on minimal configurations. In contrast, our tailored BFS and DFS search paradigms efficiently navigate the combinatorial explosion, successfully scaling to highly complex traffic scenes with up to 24 entities.
In summary, the contributions of this paper are as follows.

\begin{itemize}[leftmargin=10pt]
	\item 
	We propose a scenario program synthesis technique that, given \highlight{a structured concrete scene}, automatically generates a structured scenario program. This program replaces rigid numerical coordinates with symbolic spatial relations, precisely capturing the underlying geometric layouts and relational constraints of the real world.
	\item
	We introduce a decoupled synthesis strategy to overcome intractability of monolithic SMT encoding. By separation, we resolve continuous geometric relations via lightweight, atomic SMT constraints, while significantly accelerating discrete program search using two tailored pruning mechanisms: \textbf{Common Prefix Abstraction} for BFS and \textbf{Branch-and-Bound} for DFS.
	\item
	We conduct extensive evaluations on 20 diverse real-world scenes sourced from the nuScenes dataset. The results demonstrate our approach's high computational efficiency and scalability, successfully generating valid scenario programs for highly complex environments with up to 24 entities where existing baselines fail.
\end{itemize}

	%

\vspace{-6pt}
\section{System Overview}

In this section, we present the overall architecture of \methodname, as illustrated in Fig. \ref{fig:OVERVIEW}, and detail the functionality of its key components.

\textbf{Structured Concrete Scene.} 
\highlight{To bridge the sim-to-real gap, \methodname\ starts from a structured concrete scene. As illustrated in Fig.~\ref{fig:vis_input_a}, the input describes a real-world traffic scene containing three entities, each annotated with its type, position, and heading. Such a structured scene can be obtained from annotated datasets or from upstream perception/VLM frontends. \methodname\ synthesizes a generalized program from this input, which can subsequently be executed by a simulator to generate diverse scenario variations that preserve the core spatial relations of the original scene.}

\textbf{Expected Output.} 
As an important scenario description language, Scenic is compatible with multiple simulators and can produce corresponding scene images through simulation. Therefore, \methodname\ targets the synthesis of Scenic program that captures the spatial layouts between entities in the \highlight{input scene}, enabling the generation of additional images that preserve these spatial relations via these simulators.
Our approach focuses on a core subset of Scenic. Rather than merely transcribing hard-coded numerical coordinates, our approach aims to discover the most natural relational dependencies (\textit{e.g.}, ahead of, facing toward) to describe the scene layout.
Given an \highlight{input scene}, our approach synthesis a Scenic program such that the \highlight{input scene} lies within the scene space defined by that program. 
As illustrated in Fig. \ref{fig:prog_combined_b}, while the scene in Fig. \ref{fig:vis_input_a} can be represented by various valid programs, \methodname\ is designed to search for and synthesize the most structured representation. Specifically, \methodname\ aims to output the bottom program as the optimal result, as it replaces fixed numerical values with symbolic spatial relations to capture the scene's core layout.

\textbf{Sketch Generator.} 
Because the synthesized target program is strictly scoped to the entities populating the scene, it structurally comprises exactly $n$ entity-specific declarations. Given $n$ distinct entities extracted from the input scene, we directly construct a structural program sketch containing $n$ uninstantiated statements (\textit{i.e.}, syntactic holes), as demonstrated below. 
\vspace{-2pt}
\begin{equation}
	\small
	\begin{aligned}
		\square_{stmt} \  \ \square_{stmt} \  \ 
		\dots \  \ \square_{stmt} 
	\end{aligned}
	\label{eq:sketch_example2}
\end{equation}

\noindent Each entity-specific declaration 
$\square_{stmt}$ is defined as 
$\square_{entity} = \textsf{new} \ 
\square_{type} \  \square_{pos} , \ 
\square_{heading} $. 
Specifically, $\square_{entity}$ denotes the target entity 
to be declared, and $\square_{type}$ denotes the type of 
the entity.
$\square_{pos}$ and $\square_{heading}$ represent the 
relative positional and orientational relationships, 
respectively.

\textbf{Sketch-Based Searcher.}
The sketch-based search process fundamentally amounts to exploring the program space defined by the sketch, iteratively filling in the hole nodes within it. We primarily employ two distinct search strategies: breadth-first search (BFS) and depth-first search (DFS). 
Because positions and orientations are expressed relative to a reference frame, the search process is designed to address the following three parts:
\begin{itemize}[leftmargin=10pt]
	\item \textbf{Position attribute resolution}: The searcher 
	prioritizes more direct and natural positional relation predicates 
	(such as \textit{ahead of} and \textit{right of}) over explicit 
	coordinate values. Concurrently, it infers feasible numerical 
	ranges associated with these relations (\textit{e.g.}, “5 to 10 
	meters ahead”). Explicit coordinate attributes (\textit{i.e.}, 
	concrete Cartesian positions) are assigned lower priority and incur 
	higher synthesis cost compared to relational predicates.
	\item \textbf{Orientation attribute resolution}: Analogous to position, the searcher prioritizes relational orientations (such as \textit{facing toward} or \textit{facing away from}) over explicit angular offsets.
	\item \textbf{Selection of the relative reference frame}: When multiple entities are present, only the ego vehicle or declared entities may serve as valid reference frames, ensuring dependency chain validity. The resolution of the optimal positional and orientational attributes is intrinsically dependent on the choice of the reference entity. Specifically, we evaluate candidate references and select the one that yields the highest-priority spatial relation.
\end{itemize}

\textbf{SMT Encoder \& Solving.}
Building upon the exploration of the searcher, the exact instantiation of statement attributes is delegated to an SMT solver. Specifically, for a given target and reference entity, the synthesizer iterates through candidate spatial relations, systematically encoding their geometric semantics into SMT formulas. The solver acts strictly as a geometric oracle to verify the satisfiability of these hypothesized relations. Upon returning a \texttt{sat} determination, the solver extracts a valid model, directly yielding the concrete numerical assignments for the continuous parameters (\textit{e.g.}, distance boundaries or angular ranges).

\textbf{Partial Program Pruner.}
As the number of entities increases, the synthesis search space expands factorially ($O(N!)$) due to the permutation of insertion orders. As the complexity of the target scene increases, the search space (despite being guided by a sketch) remains prohibitively large. To accelerate the search process, we introduce search pruning techniques. Specifically, we integrate the following two complementary pruning mechanisms:
\begin{itemize}[leftmargin=10pt]
	\item \textbf{Common Prefix Abstraction-based Pruning.} 
	When declaring a new entity, its description must be anchored to either the ego vehicle or an already-declared entity as the reference. Given the same set of declared entities, selecting the same reference for a new entity incurs identical incremental cost.	
	Consider the example in Fig. \ref{fig:vis_input_a}, where the input scene contains three vehicles (denoted as car1, car2, and car3). Suppose we have two partial programs: 
	$ P_a =\{\text{car1}\ \ \text{car2}\ \ \square_{stmt} 
	\} $ where car1 and car2 have been declared in that 
	order , and 
	$ P_b =\{\text{car2}\ \ \text{car1}\ \ \square_{stmt} 
	\} $ which declares the same two entities but in a 
	different order.	
	Both $P_a$ and $P_b$ have declared the same set of entities (car1 and car2), so any subsequent entity (\textit{e.g.}, car3) will be described relative to the same reference entity under our semantics, incurring identical incremental cost. If the total cost of $P_a$ is lower than that of $P_b$, then for any valid completion of the hole, the full program derived from $P_a$ will always have a lower (or equal) total cost than the corresponding completion of $P_b$. Therefore, $ P_b$ can be safely pruned from the search space without sacrificing optimality.
	\item \textbf{Branch-and-Bound Pruning.} 
	Branch-and-bound pruning is an exact algorithmic strategy commonly used for solving combinatorial optimization problems. Its core idea is to systematically enumerate candidate solutions while leveraging upper- and lower-bound information to eliminate subspaces that cannot contain an optimal solution, thereby avoiding exhaustive enumeration.	
	In our setting, we maintain the best complete candidate program $P$ found so far (\textit{i.e.}, the one with the lowest total cost). During search, if a partial program’s accumulated cost meets or exceeds the cost of $P$, then any full program derived by extending this partial program will necessarily have a cost no less than that of $P$. Consequently, such a partial program can be safely pruned from the search space without risking the loss of an optimal solution.
\end{itemize}
	%

\section{Preliminaries}

This chapter establishes the formal foundations of our scenario synthesis 
framework. We first introduce the syntax and semantics of our DSL. Building 
upon these definitions, we provide a rigorous formulation of the scenario 
synthesis problem. 

\vspace{-3pt}
\subsection{DSL Syntax}
\vspace{-3pt}
We formally define the syntax of our DSL in Figure~\ref{fig:dsl_syntax}. 
Derived from the Scenic, the DSL is designed to model the spatial layout of 
entities by defining the relative relationships between them.
\highlight{Our DSL deliberately focuses on static spatial layouts,
	retaining Scenic primitives for describing relative positions and
	headings among entities. For numerical variation, it uses
	\textbf{Range} as its sole probabilistic operator, representing uniform
	distributions over distance and heading intervals. Dynamic behaviors,
	temporal interactions, and the richer probabilistic and procedural
	constructs of full Scenic are outside the current scope.}
A program $\langle P \rangle$ consists of a sequence of entity statements $\langle S \rangle$. Each statement $\langle S \rangle$ instantiates an entity and defines its state via three core components:
\begin{itemize}[leftmargin=10pt]
	\vspace{-1pt}
	\item \textbf{Type} ($\langle Type \rangle$): Specifies the category of the instantiated object (\textit{e.g.}, \textbf{Car}, \textbf{Pedestrian}, ...). 
	\item \textbf{Position} ($\langle Pos \rangle$): 
	Specifies the spatial location of an entity using three distinct modes of representation. First, we favor high-level directional primitives (\textit{e.g.}, \textbf{ahead of}, \textbf{left of}) that establish semantic spatial relations relative to a reference entity. These semantic primitives are assigned higher priority and incur lower synthesis cost. Second, the syntax supports explicit local Cartesian coordinates defined directly within a reference entity's local frame (\textbf{on} \dots \textbf{relative to}). Third, it allows for absolute global Cartesian coordinates (\textbf{on} \dots @ \dots) to place entities independently in the global space.
	\item \textbf{Heading} ($\langle Heading \rangle$): 
	Specifies an entity’s orientation via three distinct modes. First, we favor high-level orientational primitives (\textit{e.g.}, \textbf{facing toward}, \textbf{facing away from}) establishing semantic alignment relative to a reference entity. These primitives are assigned higher priority and incur lower synthesis cost. Second, the syntax supports explicit angular ranges defined relative to a reference entity's heading (\textbf{facing} \dots \textbf{deg relative to}). Third, it allows absolute global angular ranges (\textbf{facing} \dots \textbf{deg}) to orient entities independently in the global coordinate system.
\end{itemize}
\vspace{-3pt}

\begin{figure}[t]
	\centering
	\small 
	\begin{tabular}{rcl}
		$\langle P \rangle$ & $::=$ & $ \langle S \rangle^*$ \\[3pt]
		
		$\langle S \rangle$ & $::=$ & $\langle Entity \rangle = \textbf{new} \ \langle Type \rangle \ \langle Pos \rangle, \ \langle Heading \rangle$ \\[3pt]
		
		$\langle Type \rangle$ & $::=$ & $\textbf{Car} \mid \textbf{Pedestrian} \mid \dots$ \\[3pt]
		
		$\langle Pos \rangle$ & $::=$ & $(\textbf{ahead of} \mid \textbf{behind} \mid \textbf{left of} \mid \textbf{right of}) \ \ \langle Ref \rangle \ \ \textbf{by } \langle Range \rangle$ \\
		&\ \ &$\mid$  $\textbf{on } \langle Range \rangle @ \langle Range \rangle \ \textbf{relative to } \langle Ref \rangle $ $\mid$  $\textbf{on } \langle Range \rangle @ \langle Range \rangle$\\[3pt]
		
		$\langle Heading \rangle$ & $::=$ & $\textbf{facing } (\textbf{toward} \mid \textbf{away from}) \ \langle Ref \rangle$  $\mid$  $\textbf{facing } \langle Range \rangle \textbf{ deg relative to } \langle Ref \rangle$ \\ & \ &$\mid$  $\textbf{facing } \langle Range \rangle \textbf{ deg }$\\[5pt]

		$\langle Range \rangle$ & $::=$ & $\textbf{Range}(a, b), \quad a, b \in \mathcal{R}$ \\[3pt]
		
		$\langle Ref \rangle$ & $::=$ & $\langle Entity \rangle \mid \textbf{ego}$ \\[3pt]
		
		$\langle Entity \rangle$ & $::=$ & ${id_1} \mid {id_2} \mid \dots$ 
	\end{tabular}
	\vspace{-10pt}
	\caption{The syntax of the Scenic DSL, where $\mathcal{R}$ denotes the set of 
		real numbers. A program consists of $n$ entities, each specified by a tuple of 
		three core attributes: its type, spatial position, and orientation.}
	\label{fig:dsl_syntax}
\end{figure}

To enable generalization over the \highlight{structured concrete scene}, we employ the parameter $\langle Range \rangle$ to construct interval-based uniform distributions within the real-valued domain $\mathcal{R}$, thereby describing a continuous space of plausible scenes. Although the syntax permits referencing arbitrary identifiers, we impose a strict semantic constraint: an entity may only use the \textbf{ego} (\textit{i.e.}, the self-entity) or previously declared entities as its reference frame.

\vspace{-2pt}
\subsection{DSL Semantics and SMT Encoding}
\vspace{-2pt}

\subsubsection{Basic Definition}
Before presenting the specific denotational semantics, we formalize the fundamental notations. 
\vspace{-4pt}
\begin{definition}[Entity State Space]
	\label{def:state_space}
	The state space of an entity is defined as $\mathcal{X} = \mathcal{T} \times \mathcal{R}^2 \times \Theta$, where $\mathcal{T}$ is a finite set of entity types ($\mathcal{T}$ contains types such as Car and Pedestrian), and $\Theta = [0, 360^\circ)$ denotes the heading angle. We establish a 2D Cartesian coordinate system where the heading angle $\theta$ is measured counter-clockwise starting from the positive Y-axis (North). A concrete state $x = (t, p, \theta) \in \mathcal{X}$ comprises a type $t \in \mathcal{T}$, a position $p \in \mathcal{R}^2$, and a heading $\theta \in \Theta$.
\end{definition}
\vspace{-6pt}
\begin{definition}[Scene]
	\label{def:scene}
	Let $\mathcal{E}$ be a finite set of available entities. A concrete scene $\sigma$ is a partial function assigning each involved entity a state, formally defined as $\sigma: \mathcal{E} \rightharpoonup \mathcal{X}$. We denote the universe of all such scenes as $\Sigma$. The scene containing no entities is denoted the empty scene $\sigma_{\emptyset} \in \Sigma$.
\end{definition}
\vspace{-4pt}
\begin{definition}[Scenario]
	\label{def:scenario}
	A scenario $\Omega$ is formalized as a set of concrete scenes, \textit{i.e.}, $\Omega \subseteq \Sigma$. The semantic domain of scenarios is defined as the powerset $2^{\Sigma}$. We specifically define the empty scenario as $\Omega_{\emptyset} = \{ \sigma_{\emptyset} \} \in 2^{\Sigma}$.
\end{definition}
\vspace{-4pt}

\subsubsection{Denotational Semantics}
The denotational semantics maps the syntactic program into a scenario. The semantic functions are defined as follows:
\vspace{-2pt}
\begin{align*}
	\mathbb{T} &: \langle Type \rangle \to \mathcal{T} & \mathbb{H} &: \langle Heading \rangle \times \mathcal{X} \times \mathcal{R}^2 \to 2^\Theta \\[-2pt]
	\mathbb{R} &: \langle Range \rangle \to 2^{\mathcal{R}} & \mathbb{V} &: \langle S \rangle \times 2^\Sigma \to 2^\Sigma \\[-2pt]
	\mathbb{P} &: \langle Pos \rangle \times \mathcal{X} \to 2^{\mathcal{R}^2} & \mathbb{D} &: \langle P \rangle \times 2^\Sigma \to 2^\Sigma
\end{align*}

The base valuations $\mathbb{T}$ and $\mathbb{R}$ deterministically map syntactic primitives to their fundamental mathematical domains: $\mathbb{T}$ maps a type identifier to its corresponding semantic element in $\mathcal{T}$, and $\mathbb{R}(\textbf{Range}(a,b)) = \{r \in \mathcal{R} \mid a \le r \le b\}$, where $a, b \in \mathcal{R}$ are real numbers.

Crucially, the statement semantics $\mathbb{V}$ incrementally constructs the scenario. Let $S$ be an entity declaration statement of the form $e = \textbf{new}\ type\ pos, heading$. Given a current scenario $\Omega$, $\mathbb{V}$ evaluates $S$ by extending every existing scene $\sigma \in \Omega$ with the newly instantiated entity $e$:
{\small
$$
\mathbb{V}(S, \Omega) = \left\{ \sigma \cup \{e \mapsto (t, p, \theta)\}  \middle|\ \sigma \in \Omega \land t = \mathbb{T}(type) \land p \in \mathbb{P}(pos, s_{pos})  \land \theta \in \mathbb{H}(heading, s_{heading}, p) \right\}
$$}

\noindent where $s_{pos} = \sigma(ref_{pos})$ and $s_{heading} = \sigma(ref_{heading})$ are the states of the reference entities specified in $pos$ and $heading$ respectively (or undefined for absolute coordinates).

The program semantics $\mathbb{D}$ is then the composition of individual statement valuations. For a program $P = S \ P'$, the scenario is progressively constructed:$$\mathbb{D}(S \ P', \Omega) = \mathbb{D}(P', \mathbb{V}(S, \Omega))$$Ultimately, the final scenario is derived by evaluating the program starting from the initial empty scenario: $\Omega_{P} = \mathbb{D}(P, \Omega_{\emptyset})$.

\textbf{Semantics of Position ($\mathbb{P}$)}
We formalize the positional primitives (spanning high-level directional relations, explicit relative Cartesian coordinates, and absolute global placements) by defining their denotational semantics. The semantic function $\mathbb{P}$ resolves these spatial specifications into concrete geometric regions conditionally evaluated given the state of the reference entity. For relative primitives, let $s_{e} = (t_e, (x_e, y_e), \theta_e) \in \mathcal{X}$ denote the state of the reference entity $e$.
High-level directional relations permit a tolerance in angular alignment. We introduce a tolerance parameter $\delta_1$ to define the acceptable angular deviation. In our implementation, we empirically set $\delta_1$ to 10°. This bounds the valid spatial region to a $\pm$10° fan-shaped sector strictly aligned with the reference entity's forward directional axis. For example, consider the semantic evaluation of the expression "\textbf{ahead of} $e$". This expression restricts the valid target coordinates to a narrow $20^\circ$ cone directly in front of the reference entity $e$.
As shown in \eqref{eq:pos_semantics}, the semantic evaluation of a specific directional expression ($dir \in \{\textbf{ahead of}, \textbf{behind}, \textbf{left of}, \textbf{right of}\}$) denotes the subset of coordinates in $\mathbb{R}^2$ that satisfy the distance and angular constraints relative to the exact state of $e$.
\begin{equation}
	\small
	\label{eq:pos_semantics}
\mathbb{P}(dir \ e \textbf{ by } \textbf{Range}(a,b), s_{e}) =
\left\{ (x, y)  \ \middle|\
\begin{aligned}
	& \sqrt{(x-x_e)^2 + (y-y_e)^2} \in \mathbb{R}(\textbf{Range}(a,b)) \\
	\land \ & \mathcal{W}\left( \mathcal{N}(\beta((x,y), (x_e, y_e))), \ \mathcal{N}(\theta_e + \phi_{dir}) \right) \le \delta_1
\end{aligned}
\right\}
\end{equation}
Where:
\begin{itemize}[leftmargin=10pt]
	\item $\beta((x,y), (x_e,y_e))$ computes the absolute angle from the reference point $(x_e,y_e)$ to the target point $(x,y)$. Since the standard $atan2$ function (assumed to output degrees) measures the angle relative to the positive X-axis, we apply a $-90^\circ$ offset to align the result with our coordinate system, where $0^\circ$ is defined as the positive Y-axis (North):
	\[\small
	\beta((x,y), (x_e,y_e)) = atan2(y - y_e, x - x_e) - 90^\circ
	\]
	
	\item $\mathcal{N}(\alpha)$ is the normalization function that maps any angle $\alpha$ to the range $[0, 360)$
	\[\small
	\mathcal{N}(\alpha) = 
	\begin{cases} 
		\alpha - 360^\circ & \text{if } \alpha \ge 360^\circ \\
		\alpha + 360^\circ & \text{if } \alpha < 0^\circ \\
		\alpha & \text{otherwise}
	\end{cases}
	\]
	
	\item $\mathcal{W}(u, v)$ normalizes the absolute difference between two angles $u, v \in [0, 360)$ to the range $[0, 180^\circ]$.
	\[\small
	\mathcal{W}(u, v) = 
	\begin{cases} 
		360^\circ - |u - v| & \text{if } |u - v| > 180^\circ \\
		|u - v| & \text{otherwise}
	\end{cases}
	\]
	
	\item $\phi_{dir}$ specifies the base angular offset for the target direction. These values represent the relative rotation applied to the reference entity's current heading $\theta_e$, strictly following our counter-clockwise coordinate convention:
	\[
	\phi_{ahead} = 0^\circ, \quad \phi_{left} = 90^\circ, \quad \phi_{behind} = 
	180^\circ, \quad \phi_{right} = 270^\circ 
	\]
\end{itemize}

The evaluation of the relative expression $\textbf{on Range}(a,b) @ \textbf{Range}(c,d) \textbf{ relative to } e$ denotes the subset of coordinates in $\mathcal{R}^2$ that fall within the Cartesian product of longitudinal and lateral intervals defined in the reference entity's local frame:
\begin{equation}
	\small
	\label{eq:local_xy_semantics}
\begin{split}
	\mathbb{P}(\textbf{on } \textbf{Range}(a,b)@&\textbf{Range}(c,d) \textbf{ relative to } e, s_{e}) = \\
	&\quad \left\{ (x, y)  \ \middle|\
	\begin{aligned}
		& ((x - x_e) \cos \theta_e + (y - y_e) \sin \theta_e) \in \mathbb{R}(\textbf{Range}(a,b)) \\
		\land \ & ((y - y_e) \cos \theta_e - (x - x_e) \sin \theta_e) \in \mathbb{R}(\textbf{Range}(c,d))
	\end{aligned}
	\right\}
\end{split}
\end{equation}

The absolute Cartesian primitive defines a spatial bounding box directly in the global coordinate system, operating independently of any reference entity:
\begin{equation*}
	\small
	\mathbb{P}(\textbf{on } \textbf{Range}(a,b) @ \textbf{Range}(c,d)) =
	\left\{ (x, y) \ \middle|\ x \in \mathbb{R}(\textbf{Range}(a,b)) \land y \in \mathbb{R}(\textbf{Range}(c,d)) \right\}
\end{equation*}

\textbf{Semantics of Heading ($\mathbb{H}$)}
We formalize the heading primitives (spanning high-level interactive orientations, explicit relative angles, and absolute global angles) by defining their denotational semantics. The valuation function $\mathbb{H}$ denotes the set of valid absolute angles in $[0, 360^\circ)$ that satisfy the given constraints.

To similarly capture the semantic looseness of orientational constraints like \textbf{facing toward} or \textbf{facing away from}, we introduce a heading tolerance parameter $\delta_2$, which relaxes strict geometric alignment into a valid continuous interval. In our implementation, we empirically set $\delta_2$ to $5^\circ$.

Crucially, computing these interactive orientations inherently depends on both the exact state of the reference entity and the spatial location of the current entity itself. Let $\sigma \in \Sigma$ be the concrete scene currently being evaluated, which fixes the reference entity's state as $\sigma(e) = (t_e, (x_e, y_e), \theta_e)$. Furthermore, let $p = (x, y) \in \mathcal{R}^2$ represent a candidate position drawn from the entity's positional feasible region (as bound in the statement semantics $\mathbb{V}$). The denotational semantics evaluates the valid angular intervals conditional on both the scene $\sigma$ and the candidate position $p = (x, y)$:
\vspace{-3pt}
\begin{equation}
	\small
	\label{eq:heading_semantics}
	\mathbb{H}(heading, s_{e}, (x, y))=
	\begin{cases}
		\left\{ \phi \middle|\
		\highlight{\mathcal{W}\left(
			\phi,
			\mathcal{N}\left(\beta((x_e,y_e),(x,y))\right)
			\right) \le \delta_2}
		\right\} \\
		\quad  \text{if } heading =\textbf{facing toward } e \\[5pt]

		\left\{ \phi  \middle|\
		\highlight{\mathcal{W}\left(
			\phi,
			\mathcal{N}\left(\beta((x,y),(x_e,y_e))\right)
			\right) \le \delta_2}
		\right\} \\
		\quad   \text{if } heading =\textbf{facing away from } e
		\\[5pt]

		\left\{ \phi  \middle|\ \exists \alpha \in \mathbb{R}(\textbf{Range}(a,b)), \ \phi = \mathcal{N}(\theta_e + \alpha)  \right\} \\
		\quad \text{if } heading = \textbf{facing}\ \textbf{Range}(a,b) \textbf{ deg relative to } e \\[5pt]

		\left\{ \phi  \ \middle|\ \exists \alpha \in \mathbb{R}(\textbf{Range}(a,b)), \ \phi = \mathcal{N}(\alpha)  \right\} \\ \quad
	 \text{if } heading =\textbf{facing}\ \textbf{Range}(a,b) \textbf{ deg}
	\end{cases}
\end{equation}

\vspace{-3pt}

\textbf{SMT Encoding of Position ($\mathbb{P}$) and Heading ($\mathbb{H}$)}
Our SMT formulation computationally realizes these semantics, serving 
as a constraint-based oracle to verify the feasibility of candidate 
spatial and orientational configurations.

For position encoding, given a candidate coordinate $(x, y)$ and reference state $(x_e, y_e, \theta_e)$, the solver evaluates the satisfiability of the geometric membership. For high-level directional relations (\textbf{ahead of}, \textbf{behind}, \textbf{left of}, \textbf{right of}), the solver performs conditional spatial checks to verify if geometric membership holds within fuzzy angular regions. In contrast, explicit relative and absolute coordinates are mathematically guaranteed to be \texttt{SAT}; since the physical states are fixed during evaluation, their corresponding coordinates can always be determined. Upon a \texttt{SAT} result, we extract valid boundaries to instantiate spatial ranges.

For heading encoding, high-level heading relations (\textbf{facing toward}/\textbf{facing away from}) are evaluated by checking the satisfiability of the orientation variable $\phi$ bounded by the tolerance $\delta_2$. For numerical primitives (\textbf{facing Range}(a,b) \textbf{deg relative to} and \textbf{facing Range}(a,b) \textbf{deg}), the constraints explicitly bound the valid angular space either as an offset from the reference's heading or within the global coordinate system, from which the solver extracts the angular boundaries.

\highlight{Although an individual relation over fixed coordinates can
	be checked using closed-form arithmetic, we use SMT to express relation
	feasibility and the synthesis of distance or angle interval parameters
	in a uniform nonlinear-real constraint system. This formulation also
	provides a direct extension path for jointly reasoning about additional
	continuous constraints, such as uncertain input regions or map/lane
	constraints.}

To accelerate solving and ensure practical alignment, we employ unified fixed-length constraints: for positional ranges, $b = a + C_{dist}$ and $d = c + C_{dist}$ with $C_{dist}=5\,m$; for heading ranges, $b = a + C_{head}$ with $C_{head} = 5^\circ$.
\highlight{These four constants ($\delta_1$, $\delta_2$, $C_{dist}$, and 
$C_{head}$) serve two different roles. The angular
	tolerances $\delta_1$ and $\delta_2$ define the acceptable angular
	deviations for high-level directional and heading relations. Increasing
	these tolerances allows such relations to cover larger angular regions,
	whereas decreasing them requires closer geometric alignment and may
	cause the synthesizer to use explicit relative coordinates or angles
	instead. In contrast, $C_{dist}$ and $C_{head}$ determine the widths of
	the synthesized \textbf{Range} expressions: larger values allow broader
	variations in position or heading, while smaller values keep sampled
	scenes closer to the input. We use the same values throughout our
	evaluation; other domains or entity scales may use different values to
	reflect their spatial characteristics and the desired amount of
	scenario variation.}

\vspace{-5pt}
\subsection{Problem Definition}


We cast the generation of scenario programs as a program synthesis task, specifically searching for the optimal scenario program whose semantic domain contains the given concrete scene.

\begin{definition}
	\textbf{\textup{(Specification)}} We utilize the input scene $\sigma_{in} \in \Sigma$ as a formal specification for program synthesis, thereby guiding the search process toward valid program configurations.
\end{definition}

\begin{definition}
	\label{consistency}
	\textbf{\textup{(Consistency)}} We define a consistency relation between a scenario program $P$ and a specification $\sigma_{in}$ as $P \models \sigma_{in}$. This relation holds if and only if $\sigma_{in}$ resides within the semantic domain induced by $P$, \textit{i.e.}, $ \sigma_{in} \in \mathbb{D}(P, \Omega_{\emptyset})$.
\end{definition}

\begin{definition}
	\label{costF}
	\textbf{\textup{(Cost Function $\mathcal{C}$)}} To rank semantically equivalent programs by how well they capture scene relational structures, we define a cost function over the $n$ statements of $P$ as  \[\mathcal{C}(P) = \sum_{i=1}^{n} \big( penalty(pos_i) + penalty(heading_i) \big)\]
	\vspace{-3pt}
	The penalty prioritizes high-level relational semantics:
	\[
	penalty(pos_i) = 
	\begin{cases} 
		2 & \text{if } \mathit{pos}_i \text{ = } \textbf{on } \langle Range \rangle @ \langle Range \rangle  \\
		1 & \text{if } \mathit{pos}_i \text{ = } \textbf{on } \langle Range \rangle @ \langle Range \rangle \textbf{ relative to } \langle Ref \rangle \\
		0 & \text{if } \mathit{pos}_i \text{ = } dir \ \langle Ref \rangle \textbf{ by } \langle Range \rangle
	\end{cases}
	\]
	\[
	penalty(\mathit{heading}_i) = 
	\begin{cases} 
		2 & \text{if } \mathit{heading}_i \text{ = } \textbf{facing } \langle Range \rangle \\
		1 & \text{if } \mathit{heading}_i \text{ = } \textbf{facing } \langle Range \rangle \textbf{ deg relative to } \langle Ref \rangle \\
		0 & \text{if } \mathit{heading}_i \text{ = } \textbf{facing } (\textbf{toward} \mid \textbf{away\ from}) \ \langle Ref \rangle
	\end{cases}
	\]
\end{definition}

\vspace{-2mm}
\begin{definition}
	\label{Problem}
	\textbf{\textup{(Problem)}} The scenario program synthesis task is formalized as a combinatorial optimization problem. We seek to find an optimal valid program $\mathcal{P}^*$ that minimizes $\mathcal{C}(P)$, under the strict constraint that the denotational semantics of $\mathcal{P}^*$ contains the input concrete scene $\sigma_{in}$.
		\begin{equation}
			\mathcal{P}^* = \underset{P}{\operatorname{arg\ min}} \ \mathcal{C}(P) \quad \text{s.t.} \quad \mathcal{P}^* \models \sigma_{in}
		\end{equation}
	\vspace{-2mm}
\end{definition}

	%

\vspace{-4mm}
\section{Synthesis Algorithm}
\label{sec:synthesis_algorithm}

This section presents the algorithm for the scenario synthesis problem. We first introduce the top-level algorithm, followed by its other components.

\vspace{-2mm}
\subsection{Top-Level Synthesis Algorithm}
\label{subsec:top_level_algorithm}

	Alg. \ref{alg:top_level} outlines our top-level synthesis procedure, \textsc{ConstraintSolvingBasedSynthesizer}. 
It accepts two inputs: a concrete scene $\sigma_{in}$ (extracted from a real image, containing state labels for a finite set of entities $\mathcal{E}$) and a specified search strategy \text{mode} $\in \{\text{DFS, BFS}\}$. 
The objective is to output an optimal scenario program $\mathcal{P}^*$  rigorously satisfying given specifications (Definition \ref{Problem}).  
The algorithm begins by invoking \textsc{SketchGenerator} (Line \ref{sg}) to construct a structural sketch $\mathcal{S}$ from $\sigma_{in}$. 
Specifically, using $\mathcal{E}$’s cardinality, this generator instantiates an equal number of statements with uninstantiated holes, as previously illustrated in Eq.(\ref{eq:sketch_example2}). 
Following the sketch generation, the algorithm routes execution to either \textsc{SketchBasedSearchBFS} (Line \ref{bfs}) or \textsc{SketchBasedSearchDFS} (Line \ref{dfs}), depending on the specified \text{mode}, to systematically resolve the holes and discover $\mathcal{P}^*$. 
Crucially, assuming a perfect SMT oracle, our search framework guarantees soundness and completeness. Formal discussions and proofs of these properties are detailed in Section \ref{subsec:theoretical_analysis}.

\vspace{-2pt}
\subsection{Sketch-Based Search with Pruner} 
\label{subsec:sketch_based_search}

In this subsection, we formulate the scenario program 
generation process as a combinatorial search problem over 
the permutation space of entity declarations. As formalized 
in Eq.~\ref{eq:sketch_example2}, the scenario sketch 
$\mathcal{S}$ is conceptualized as an ordered sequence of 
uninstantiated statement holes ($\square_{stmt}$). The 
primary objective is to dynamically determine the optimal 
declaration order for the finite set of entities 
$\mathcal{E}$, while minimizing the cumulative semantic 
penalty $C$. To efficiently navigate this permutation 
space, we implement two variant search strategies: 
Breadth-First Search (BFS) and Depth-First Search (DFS).

Both search strategies operate on a unified state-space tree representing the permutation of different entities. In this tree, each node represents the instantiation of a single entity's statement. Consequently, the continuous path from the root to any given node constitutes a search state, which is formally defined by a partial sketch $\mathcal{S}_{curr}$ and its accumulated semantic penalty $C_{curr}$.

At each expansion step, the search controller determines 
the set of defined entities $E_{curr}$ within 
$\mathcal{S}_{curr}$. If unassigned entities remain, the 
controller fetches the next available structural template 
$\square_{stmt}$. Crucially, the search strategy operates 
purely at the macro-level: it selects an unassigned entity 
$e_i \in \mathcal{E} \setminus E_{curr}$ and binds it to 
the entity identifier slot 
($\square_{stmt}.\square_{entity} \gets e_i$). The 
concrete geometric instantiation of the remaining sub-holes 
( $\square_{type}$, $\square_{pos}$, and 
$\square_{heading}$) is then delegated to the micro-level 
SMT solver via the \textsc{SynthesizeStatement} function 
(detailed later in Alg. \ref{alg:synthesize_statement}). 
This strict decoupling of macro-level topological ordering 
from micro-level geometric constraint solving effectively 
bounds the search complexity, making an otherwise 
intractable problem solvable.

\begin{algorithm}[!t]
	\caption{\textsc{ConstraintSolvingBasedSynthesizer}($\sigma_{in}, \text{mode}$)}
	\label{alg:top_level}
	\setstretch{0.75}
	\begin{algorithmic}[1]
		\State \textbf{input:} A concrete input scene $\sigma_{in}$ containing a finite set of entities $\mathcal{E}$, and search strategy $\text{mode} \in \{\text{DFS, BFS}\}$.
		\State \textbf{output:} An optimal scenario program $\mathcal{P}^*$ such that $\mathcal{P}^* \models \sigma_{in}$.
		
		\State $\mathcal{S} \gets \textsc{SketchGenerator}(\sigma_{in})$  \label{sg}
		
		\If{$\text{mode} = \text{BFS}$}
			\State $\mathcal{P}^* \gets \textsc{SketchBasedSearchBFS}(\mathcal{S}, \sigma_{in})$\label{bfs}
			\Else
			\State $\mathcal{P}^* \gets \textsc{SketchBasedSearchDFS}(\mathcal{S}, \sigma_{in})$\label{dfs}
			
		\EndIf
		
		\State \textbf{return} $\mathcal{P}^*$
	\end{algorithmic}
\end{algorithm}

\begin{algorithm}[!t]
	\caption{\textsc{SketchBasedSearchBFS}($\mathcal{S}, 
	\sigma_{in}$)}
	\label{alg:search_controller_bfs}
	\setstretch{0.75}
	\begin{algorithmic}[1]
		\State \textbf{input:} The input sketch $\mathcal{S}$ and the concrete scene $\sigma_{in}$ containing a finite set of entities $\mathcal{E}$.
		\State \textbf{output:} The optimal program $\mathcal{P}^*$ such that $\mathcal{P}^* \models \sigma_{in}$.
		\State $\mathcal{P}^* \gets \perp, \ \mathcal{M}_{smt} \gets \{\}$ \hfill $\triangleright$ $\perp$ denotes null; $\mathcal{M}_{smt}$: SMT query cache \label{2l3}
		
		\State Let $e_{ego} \in \mathcal{E}$ be the designated ego entity \label{2l4}
		\State $Q \gets \{ (\mathcal{S}, 0) \}$ \label{2l5}
		
		\While{$Q \neq \emptyset$} \label{2l6}
		\State $\mathcal{T}_{layer} \gets \{\}$ \hfill $\triangleright$ Hash table: $E_{next} \mapsto (\mathcal{S}, C)$ for current layer\label{2l7}
		
		\For{\textbf{each} $(\mathcal{S}_{curr}, C_{curr}) \in Q$}\label{2l8}
		\State $E_{curr} \gets \text{Entities}(\mathcal{S}_{curr})$ \hfill $\triangleright$ Extract set of entities already declared in $\mathcal{S}_{curr}$ \label{2l9}
		
		\If{$ \mathcal{E} \setminus E_{curr} = \emptyset $}\label{2l10}
		\State \Comment{Optimal complete program found; $\mathcal{P}^* \models \sigma_{in}$ by construction}
		\State $ \mathcal{P}^* \gets \mathcal{S}_{curr}$\label{2l12}
		\EndIf
		
		\For{\textbf{each} $e_i \in \mathcal{E} \setminus E_{curr}$} \label{2l13}
		\State $\square_{stmt} \gets 
		\text{GetNextStatement}(\mathcal{S}_{curr})$ \hfill 
		$\triangleright$ Fetch next uninstantiated 
		statement hole \label{2l14}
		\State $\square_{stmt}.\square_{entity} \gets e_i$ 
		\hfill $\triangleright$ Dot notation accesses 
		sub-holes of the statement template \label{2l15}
		\State $E_{ref} \gets \{e_{ego}\} \cup E_{curr}$\label{2l16}
		\State $\mathcal{S}_{new}, \Delta c \gets 
		\textsc{SynthesizeStatement}(\square_{stmt}, 
		e_i, E_{ref}, \mathcal{S}_{curr}, \sigma_{in}, 
		\mathcal{M}_{smt})$\label{2l17}
		\State $C_{new} \gets C_{curr} + \Delta c$\label{2l18}
		
		\State $E_{next} \gets E_{curr} \cup 
		\{e_i\}$\label{2l19}
		\If{$E_{next} \notin \mathcal{T}_{layer}$ \textbf{ 
		or } $C_{new} < 
		\mathcal{T}_{layer}[E_{next}].C$}\label{2l21}
		\State \Comment{\textbf{Common Prefix Abstraction-based 
		Pruning}}
		\State $\mathcal{T}_{layer}[E_{next}] \gets (\mathcal{S}_{new}, C_{new})$\label{2l22}
		
		\EndIf
		\EndFor
		\EndFor
		\State $Q \gets \mathcal{T}_{layer}$ \label{2l23}
		\EndWhile
		
		\State \textbf{return} $\mathcal{P}^*$
	\end{algorithmic}
\end{algorithm}

\vspace{1ex}
\noindent\textbf{Breadth-First Search (BFS) and Common Prefix Abstraction-based Pruning.}
Alg. \ref{alg:search_controller_bfs} details the BFS approach to solve the synthesis problem.
The algorithm takes an initial sketch $\mathcal{S}$ and the concrete scene $\sigma_{in}$ (containing a finite set of entities $\mathcal{E}$) as inputs, aiming to synthesize the optimal program $\mathcal{P}^*$ with the minimal semantic penalty.
Initialization begins by setting the optimal program to null ($\perp$), and allocating an empty global cache $\mathcal{M}_{smt}$ for SMT solver memoization (Line \ref{2l3}). After designating the ego object $e_{ego}$ as the primary reference (Line \ref{2l4}), the active queue $Q$ is seeded with the initial sketch and a starting cost of zero (Line \ref{2l5}).
From this initial state, the algorithm systematically expands the search tree layer by layer using the active queue $Q$ (Lines \ref{2l6}-\ref{2l23}). 
Each layer corresponds to a specific depth of the sketch tree (\textit{i.e.}, the number of instantiated statements). 
While BFS demands a larger memory footprint to store the active frontier, its strictly layer-wise expansion naturally aligns all partial programs of identical depths. This structural uniformity makes it exceptionally well-suited for the pruning technique introduced below.

Specifically, different declaration orders of the same group of entities result in the identical set $E_{curr}$.
Because the context for future SMT solver depends solely on \textit{which} entities have been declared rather than the \textit{order} of their declaration, these distinct paths can be abstracted into an identical search state.
To exploit this, we introduce \textbf{Common Prefix Abstraction-based Pruning}. 
At the beginning of each depth iteration, we initialize an empty layer-wise optimal prefix table $\mathcal{T}_{layer}$ (Line \ref{2l7}). This table is a hash map that associates each set of declared entities $E$ with the lowest-cost partial sketch $(\mathcal{S}, C)$ that has declared exactly those entities.
During the expansion phase (Lines \ref{2l8}-\ref{2l22}), if a sketch has successfully instantiated all entities ($\mathcal{E}\setminus E_{curr} = \emptyset$), we evaluate it to update the global optimal program $\mathcal{P}^*$(Lines \ref{2l10}-\ref{2l12}). 
If the sketch remains incomplete, the algorithm branches out to explore all valid single-step extensions by iterating over the pool of unassigned entities $\mathcal{E} \setminus E_{curr}$ (Line \ref{2l13}).
For each candidate $e_i \in \mathcal{E} \setminus E_{curr}$, we bind it to the next statement hole and invoke the \textsc{SynthesizeStatement} procedure (using $\{e_{ego}\} \cup E_{curr}$ as references) to deduce the extended sketch $\mathcal{S}_{new}$ and its incremental penalty $\Delta c$(Line \ref{2l14}-\ref{2l17}).
After calculating the total accumulated cost $C_{new}$ and updating the defined entity set $E_{next}$ (Lines \ref{2l18}-\ref{2l19}), the algorithm enforces the Common Prefix Abstraction. Specifically, the unordered entity subset $E_{next}$ serves as a unique hash key to index the layer-wise optimal prefix table $\mathcal{T}_{layer}$. 
Using the unordered subset $E_{next}$ as a unique hash key, $\mathcal{T}_{layer}$ is updated only if the new path achieves a strictly lower cumulative cost $C_{new}$ (Lines \ref{2l21}-\ref{2l22}). 
Consequently, any sub-optimal permutations that reach the identical entity abstraction are mathematically dominated and seamlessly pruned. 
Finally, after fully expanding the current layer, all optimal surviving states stored in $\mathcal{T}_{layer}$ are extracted to form the new active queue $Q$ for the subsequent depth iteration (Line \ref{2l23}). This pruning mechanism drastically truncates the breadth of the search tree while strictly preserving global optimality.	
Ultimately, at the maximum depth, all valid paths instantiate the full entity set $\mathcal{E}$. Consequently, $\mathcal{T}_{layer}$ collapses to a single key $\mathcal{E}$, naturally isolating the globally optimal program $\mathcal{P}^*$ (Lines \ref{2l10}-\ref{2l12}) and terminating the search.

\vspace{1ex}
\noindent\textbf{Depth-First Search (DFS) and Branch-and-Bound.} 
\begin{algorithm}[t]
	\caption{\textsc{SketchBasedSearchDFS}($\mathcal{S}, 
	\sigma_{in}$)}
	\label{alg:search_controller}
	\setstretch{0.75}
	\begin{algorithmic}[1]
		\State \textbf{input:} The input sketch $\mathcal{S}$ and the concrete scene $\sigma_{in}$ containing a finite set of entities $\mathcal{E}$.
		\State \textbf{output:} The optimal program $\mathcal{P}^*$ such that $\mathcal{P}^* \models \sigma_{in}$.
		\State $\mathcal{P}^* \gets \perp, \ C^* \gets \infty, \ \mathcal{T}_{global} \gets \{\}, \ \mathcal{M}_{smt} \gets \{\}$ \hfill $\triangleright$ $\perp$ denotes null
		\State \Comment{$\mathcal{T}_{global}$: global hash table $E_{curr} \mapsto$ minimal cost; $\mathcal{M}_{smt}$: SMT query cache}
		
		\State Let $e_{ego} \in \mathcal{E}$ be the designated ego entity
		\State $Stack \gets [(\mathcal{S}, 0)]$ \label{3l5}
		
		\While{$Stack \neq \emptyset$}\label{3l6}
		\State $(\mathcal{S}_{curr}, C_{curr}) \gets Stack.pop()$\label{3l7}
		\State $E_{curr} \gets \text{Entities}(\mathcal{S}_{curr})$ \hfill $\triangleright$ Extract declared entities from $\mathcal{S}_{curr}$ \label{3l8}
		
		\If{$E_{curr} \in \mathcal{T}_{global}$ \textbf{ and } $C_{curr} \ge \mathcal{T}_{global}[E_{curr}]$}\label{3l9}
		\State \textbf{continue} \hfill $\triangleright$ 
		\textbf{Common 
		Prefix Abstraction-based Pruning}\label{3l10}
		\EndIf
		\State $\mathcal{T}_{global}[E_{curr}] \gets C_{curr}$\label{3l11}
		
		\If{$C_{curr} \ge C^*$} \hfill $\triangleright$ 
		\textbf{Branch 
		\& Bound Pruning}\label{3l12}
		\State \textbf{continue}\label{3l13}
		\EndIf
		
		\If{$ \mathcal{E} \setminus E_{curr} = \emptyset $}\label{3l14}
		\State $\mathcal{P}^* \gets \mathcal{S}_{curr}, \ C^* \gets C_{curr}$ \Comment{$\mathcal{P}^* \models \sigma_{in}$ by construction}\label{3l15}
		\State \textbf{continue}\label{3l16}
		\EndIf
		
		\For{\textbf{each} $e_i \in \mathcal{E} \setminus E_{curr}$} \label{3l17}
		\State $\square_{stmt} \gets 
		\text{GetNextStatement}(\mathcal{S}_{curr})$ \hfill 
		$\triangleright$ Fetch next uninstantiated 
		statement hole \label{3l18}
		
		\State $\square_{stmt}.\square_{entity} \gets e_i$ 
		\hfill $\triangleright$ Dot notation accesses 
		sub-holes of the statement template \label{3l19}
		\State $E_{ref} \gets \{e_{ego}\} \cup E_{curr}$\label{3l20}
		\State $\mathcal{S}_{new}, \Delta c \gets 
		\textsc{SynthesizeStatement}(\square_{stmt}, e_i, 
		E_{ref}, \mathcal{S}_{curr}, \sigma_{in}, 
		\mathcal{M}_{smt})$\label{3l21}
		
		\State $Stack.push((\mathcal{S}_{new}, C_{curr} + \Delta c))$ \label{3l22}
		\EndFor
		\EndWhile
		
		\State \textbf{return} $\mathcal{P}^*$
	\end{algorithmic}
\end{algorithm}
Alg. \ref{alg:search_controller} details the DFS variant. Sharing identical inputs and objectives with BFS, DFS utilizes a Last-In-First-Out (LIFO) \textit{Stack} to systematically explore the permutation space (Line \ref{3l5}). While standard DFS is renowned for its memory efficiency by only maintaining the active path, we introduce a global optimal prefix table $\mathcal{T}_{global}$ to support the aforementioned Common Prefix Abstraction across all explored paths, deliberately trading memory overhead for substantial search space reduction.

At each step, the algorithm pops a state and sequentially evaluates it against our dual pruning mechanisms. First, we adapt the \textbf{Common Prefix Abstraction-based Pruning} into the DFS framework to combat structural redundancies across divergent branches. If the traversal encounters an already visited state $E_{curr}$ but with a cost $C_{curr} \ge \mathcal{T}_{global}[E_{curr}]$, the branch is safely pruned as a sub-optimal prefix (Lines \ref{3l9}-\ref{3l10}). Otherwise, DFS explicitly updates $\mathcal{T}_{global}$ to continuously record the minimum cost required to reach this unordered entity subset signature (Line \ref{3l11}).
Following this, the state undergoes \textbf{Branch-and-Bound Pruning}. Because the semantic penalty increment is strictly non-negative ($\Delta c \ge 0$), the accumulated cost along any path is monotonically non-decreasing. Thus, if any partial sketch $\mathcal{S}_{curr}$ retrieved during traversal accrues a cost $C_{curr} \ge C^*$, its entire subsequent sub-tree is guaranteed to be sub-optimal and is immediately discarded (Lines \ref{3l12}-\ref{3l13}).
If the state survives both pruning checks and represents a fully synthesized scenario program ($\mathcal{E} \setminus E_{curr} = \emptyset$), the algorithm directly establishes or tightens the valid upper bound $C^*$ for the optimal cost (Lines \ref{3l14}-\ref{3l16}).
Finally, for incomplete states surviving all checks, the algorithm proceeds to the macro-micro expansion phase. Iterating through the unassigned entities, it binds each $e_i$ to a new statement hole and invokes the \textsc{SynthesizeStatement} procedure. The resulting extended sketches and their cumulatively updated costs are subsequently pushed onto the stack (Lines \ref{3l17}-\ref{3l22}). Together, these dual pruning strategies effectively constrain the combinatorial explosion, ensuring rapid convergence to the globally optimal program $\mathcal{P}^*$.

\vspace{-4pt}

\begin{lemma}\label{lemma1}
	Under the assumption of a perfect SMT oracle, we guarantee the soundness of our pruning strategies. Any partial program discarded during the search is either definitively suboptimal or safely redundant, ensuring that the discovery of a global optimal solution is never compromised.
\end{lemma}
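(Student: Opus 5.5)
The plan is to isolate one structural fact, \emph{order-independence of incremental cost}, and derive each pruning rule from it. For a partial sketch $\mathcal{S}_{curr}$ with declared set $E_{curr}$ and a yet-undeclared entity $e_i$, the call \textsc{SynthesizeStatement}$(\square_{stmt}, e_i, \{e_{ego}\}\cup E_{curr}, \mathcal{S}_{curr}, \sigma_{in}, \mathcal{M}_{smt})$ returns an increment $\Delta c$. We claim $\Delta c$ depends only on the pair $(e_i, E_{curr})$ and not on the order in which $E_{curr}$ was declared.

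To justify the claim, we observe that consistency $P\models\sigma_{in}$ (Definition~\ref{consistency}) is checked against the fixed concrete states $\sigma_{in}(e)$. By the statement semantics $\mathbb{V}$, the feasibility of a position or heading primitive for $e_i$ relative to a reference $r$ depends only on $\sigma_{in}(r)$ and $\sigma_{in}(e_i)$. These are fixed data and do not depend on how $r$ was itself described. Assuming a perfect SMT oracle, the minimum-penalty choice among references in $\{e_{ego}\}\cup E_{curr}$ is therefore a function of the set alone.

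We also need a decomposition lemma. If $P$ is a valid complete program and $P=P_1P_2$, then $\mathcal{C}(P)=\mathcal{C}(P_1)+\sum_{S\in P_2}\Delta c(S)$. Any prefix $P_1'$ declaring the same set as $P_1$ can be concatenated with the same suffix statements, since references remain available and each statement is checked pointwise against $\sigma_{in}$. The result $P_1'P_2$ is still consistent and has cost $\mathcal{C}(P_1')+\sum\Delta c$.

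With this in hand, the two pruning rules follow.

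\textbf{Common Prefix Abstraction.} This is used by BFS (Alg.~\ref{alg:search_controller_bfs}, Lines~\ref{2l21}--\ref{2l22}) and by DFS (Alg.~\ref{alg:search_controller}, Lines~\ref{3l9}--\ref{3l10}). A discarded prefix $P_b$ shares its declared set with a retained or previously recorded prefix $P_a$ satisfying $\mathcal{C}(P_a)\le\mathcal{C}(P_b)$. By the exchange argument, every completion $P_bP_2$ is matched by $P_aP_2$, which is valid and has cost no larger. So $P_b$ is either strictly suboptimal or redundant in the tie case, where some optimum survives through $P_a$. For DFS we must also check that $P_a$'s subtree is actually explored later or has already been explored. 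This holds because $\mathcal{T}_{global}[E]$ is only written when a state passes Line~\ref{3l9}, and that state is then expanded unless Branch-and-Bound cuts it, which is handled next.

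\textbf{Branch-and-Bound.} All penalties lie in $\{0,1,2\}$, so $\Delta c\ge 0$ and accumulated cost is monotone along every path. If $C_{curr}\ge C^*$, every completion has cost at least $C^*$, which is already attained by a found program. Discarding the branch therefore loses no strictly better solution.

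To finish, we argue by induction on depth, or equivalently by an invariant. For some optimal $\mathcal{P}^*$, at every stage either an optimal program has been recorded, or some surviving state is a prefix of an optimal program, possibly after the exchange of the previous step. Each pruning event preserves this invariant by the two arguments above. Termination is guaranteed because the state space is finite, so an optimal program is eventually output.

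The main obstacle is the order-independence claim. It must be verified that $\Delta c$ truly ignores the internal structure of the prefix: the ranges synthesized for earlier entities never feed into later SMT queries, only their concrete states from $\sigma_{in}$ do. The DFS interaction also needs care, namely that $\mathcal{T}_{global}$ entries written by a state later cut by Branch-and-Bound do not wrongly block a cheaper equivalent state. I would handle this by noting that any blocked state has cost $\ge$ the recorded cost $\ge C^*$ at that time, so it would be bound-pruned anyway.
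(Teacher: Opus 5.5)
Your proposal is correct and follows the same route as the paper's proof. For Common Prefix Abstraction, both group partial programs by their declared entity set, use the fact that identical suffixes incur identical marginal cost, and treat the higher-cost program as dominated and a tie as redundant; for Branch-and-Bound, both rely on every increment being non-negative. Your extra steps make explicit what the paper only asserts or leaves implicit: that order-independence holds because each SMT query reads only the fixed states in $\sigma_{in}$, the exchange argument over arbitrary consistent completions, and the check that a $\mathcal{T}_{global}$ entry written by a state later cut by Branch-and-Bound only blocks states that would be bound-pruned anyway.
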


\begin{proof}
	Our exploration involves two primary pruning heuristics: common prefix abstraction and branch-and-bound. We formalize the soundness (optimality-preserving property) of these strategies as follows:
	\vspace{-3pt}
	\begin{itemize}[leftmargin=10pt]
		\item \textbf{Soundness of Prefix Abstraction-based Pruning}: Pruning intermediate programs based on entity-set equivalence preserves the optimal solution.
		We define an equivalence relation over partial programs where two programs are equivalent if they declare the identical set of entities, irrespective of the declaration order. While the declaration permutation affects the current accumulated cost, appending identical subsequent entities to these equivalent programs will result in the same marginal cost increase. 
		Thus, within an equivalence class, strictly higher-cost programs are dominated. For cost ties, we simply retain the first explored program to eliminate redundancy.
		By safely discarding these dominated and redundant variants , the optimality of the final synthesis is mathematically guaranteed.

		\item  \textbf{Soundness of Branch-and-Bound Pruning}: Pruning partial programs whose cost exceeds the best-known upper bound preserves the optimal solution.
		Let $\mathcal{P}^*$ denote the fully synthesized program with the minimum cost discovered so far, acting as our global upper bound. As established in Definition \ref{costF}, the cost function $C(P)$ is monotonically \textit{non-decreasing} with respect to the addition of entities. If a partial program $P_{partial}$ currently under evaluation satisfies $C(P_{partial}) \ge C(\mathcal{P}^*)$, any complete extension $P'_{partial}$ derived from it will inherently satisfy $C(P'_{partial}) \ge C(P_{partial}) \ge C(\mathcal{P}^*)$. Thus, no sequence of subsequent derivations from $P_{partial}$ can yield a strictly better solution than $\mathcal{P}^*$. Pruning $P_{partial}$ eliminates only suboptimal sub-spaces or redundant ties, rendering the Branch-and-Bound Pruning strategy profoundly safe.
	\end{itemize}
\end{proof}

\vspace{-25pt}

\subsection{SMT-Based Statement Synthesizer}
\label{subsec:statement_synthesizer}

\begin{algorithm}[t]
	\caption{\textsc{SynthesizeStatement}($\square_{stmt}, 
	e_{new}, E_{ref}, \mathcal{S}_{curr}, \sigma_{in}, 
	\mathcal{M}_{smt}$)}
	\label{alg:synthesize_statement}
	\setstretch{0.85}
	\begin{algorithmic}[1]
		\State \textbf{input:} The assigned statement hole 
		$\square_{stmt}$, target entity $e_{new}$, 
		reference set $E_{ref}$, sketch 
		$\mathcal{S}_{curr}$, concrete scene $\sigma_{in}$, 
		and global SMT cache $\mathcal{M}_{smt}$.
		\State \textbf{output:} The updated sketch $\mathcal{S}_{new}$ and accumulated semantic penalty $\Delta c$.
		
		\State $\Delta c \gets 0$ \Comment{Phase 0: Resolve 
		Semantic Type 
		($\square_{stmt}.\square_{type}$)}\label{4l4}
		
		\State $\square_{stmt}.\square_{type} \gets 
		\sigma_{in}(e_{new}).type$ \label{4l5}
		
		\State $pos\_found \gets \text{False}$ 
		\Comment{Phase 1: Synthesize Positional Relation 
		($\square_{stmt}.\square_{pos}$)}\label{4l6}
		\For{\textbf{each} $e_{ref} \in E_{ref}$}\label{4l8}
		\For{\textbf{each} $rel \in \{\textbf{ahead of}, \textbf{behind}, \textbf{left of}, \textbf{right of}\}$}\label{4l9}
		\State $sat, vals \gets \textsc{SMTPosRelSolve}(rel, e_{new}, e_{ref}, \sigma_{in}, \mathcal{M}_{smt})$
		\State \hfill $\triangleright$ Queries $\mathcal{M}_{smt}[(rel, e_{new}, e_{ref})]$ first; on miss, calls solver and caches $(sat, vals)$
		\If{$sat$}\label{4l11}
		\State $\square_{stmt}.\square_{pos} \gets 
		rel \ e_{ref} \textbf{ by Range}(vals.a,vals.b)$ 
		\State $pos\_found \gets \text{True}$ \label{4l14}
		\EndIf
		\EndFor
		\If{$pos\_found$} \textbf{break} \EndIf
		\EndFor
		
		\If{\textbf{not} $pos\_found$} \label{4l16}
		\State $\Delta c \gets \Delta c + 1$
		\State $sat, vals \gets \textsc{SMTLocalXYSolve}(e_{new}, e_{ego}, \sigma_{in}, \mathcal{M}_{smt})$
		\State \hfill $\triangleright$ Queries $\mathcal{M}_{smt}[(\text{"local\_xy"}, e_{new}, e_{ego})]$; caches $(sat, vals)$ on miss
		\State $\square_{stmt}.\square_{pos} \gets 
		\textbf{on Range}(vals.a,vals.b) @ 
		\textbf{Range}(vals.c,vals.d) \textbf{ relative to 
		} e_{ego}$\label{4l19}
		\EndIf
		
		\State $head\_found \gets \text{False}$ 
		\Comment{Phase 2: Synthesize Heading Relation 
		($\square_{stmt}.\square_{heading}$)}\label{4l20}
		\For{\textbf{each} $e_{ref} \in E_{ref}$}
		\For{\textbf{each} $rel \in \{\textbf{toward}, \textbf{away\_from}\}$}
		\State $sat \gets \textsc{SMTHeadingRelSolve}(rel, e_{new}, e_{ref}, \sigma_{in}, \mathcal{M}_{smt})$
		\State \hfill $\triangleright$ Queries $\mathcal{M}_{smt}[(rel, e_{new}, e_{ref})]$; caches $sat$ on miss
		\If{$sat$}
		\State $\square_{stmt}.\square_{heading} \gets 
		\textbf{facing } rel \ e_{ref}$
		\State $head\_found \gets \text{True}$
		\EndIf
		\EndFor
		\If{$head\_found$} \textbf{break} \EndIf
		\EndFor
		
		\If{\textbf{not} $head\_found$} \label{4l30}
		\State $\Delta c \gets \Delta c + 1$
		\State $sat, vals \gets \textsc{SMTRelativeAngleSolve}(e_{new}, e_{ego}, \sigma_{in}, \mathcal{M}_{smt})$
		\State \hfill $\triangleright$ Queries $\mathcal{M}_{smt}[(\text{"relative\_heading"}, e_{new}, e_{ego})]$; caches $(sat, vals)$ on miss
		\State $\square_{stmt}.\square_{heading} \gets 
		\textbf{facing Range}(vals.a,vals.b) \textbf{ 
		relative to } e_{ego}$\label{4l33}
		\EndIf
		
		\State $stmt \gets 
		\square_{stmt}.\text{assemble}()$ \hfill 
		$\triangleright$ Construct complete statement from 
		resolved holes\label{4l34}
		\State $\mathcal{S}_{new} \gets \mathcal{S}_{curr} 
		\oplus \{ \square_{stmt} \mapsto stmt 
		\}$\label{4l35}
		\State \textbf{return} $\mathcal{S}_{new}, \Delta c$\label{4l36}
	\end{algorithmic}
\end{algorithm}

While the Sketch Based Search dictates the macro-level declaration order of entities, the exact geometric and semantic constraints are resolved by the SMT-Based \textsc{SynthesizeStatement}, as detailed in Alg. \ref{alg:synthesize_statement}. 
The primary objective of this module is to instantiate the 
sub-holes of a given statement template ($\square_{stmt}$) 
by validating candidate spatial relations against the 
physical ground truth provided in $\sigma_{in}$.

\begin{itemize}[leftmargin=10pt]
	\item \textbf{Semantic Type Resolution.} 
	Before invoking any geometric solvers, the synthesizer first deterministically extracts and binds the entity's semantic type directly from the concrete scene $\sigma_{in}$ (Line \ref{4l5}).
	\item \textbf{Synthesize Positional Relation.}
	For $\square_{pos}$ (Lines \ref{4l6}-\ref{4l19}), the module 
	evaluates $e_{new}$ against each reference $e_{ref} \in 
	E_{ref}$ using spatial primitives (\textit{e.g.}, 
	\textbf{ahead of}). It invokes \textsc{SMTPosRelSolve} to 
	verify geometric validity against the constraints in 
	Eq.~\ref{eq:pos_semantics}. To optimize performance, 
	$\mathcal{M}_{smt}$ acts as a memoization table implemented 
	as a hash table. Each SMT solver invocation uses a 
	three-element key: (query type, target entity, reference 
	entity). For example, positional relation queries use keys 
	like $(\text{"ahead"}, e_{new}, e_{ref})$, while fallback 
	solvers use keys like $(\text{"local\_xy"}, e_{new}, 
	e_{ego})$. The stored result is either a tuple $(sat, vals)$ 
	containing satisfiability and boundary values, or a boolean 
	$sat$ for high-level heading relation queries. Before 
	invoking the underlying solver, each function checks whether 
	the query exists in $\mathcal{M}_{smt}$ and reuses cached 
	results to bypass redundant solver calls across different 
	search branches. If \texttt{SAT}, $\square_{pos}$ is 
	instantiated with the resulting boundary intervals $vals$ 
	(Lines \ref{4l11}-\ref{4l14}). If no high-level relation is 
	valid, the synthesizer falls back to resolving coordinates 
	in the ego's local frame (Eq.~\ref{eq:local_xy_semantics}). 
	By Definition \ref{costF}, this lower abstraction level is 
	penalized by incrementing the semantic cost $\Delta c \gets 
	\Delta c + 1$ (Lines \ref{4l16}-\ref{4l19}).
	\item \textbf{Synthesize Heading Relation.} 
	For $\square_{heading}$ (Lines \ref{4l20}-\ref{4l33}), 
	the synthesizer follows a similar paradigm by verifying 
	constraints against Eq.~\ref{eq:heading_semantics}. It 
	first seeks high-level relations (\textit{e.g.}, 
	\textbf{toward}) via \textsc{SMTHeadingRelSolve}; if 
	unsuccessful, it defaults to a relative heading angle. 
	Following Definition \ref{costF}, this fallback incurs 
	a penalty $\Delta c \gets \Delta c + 1$ (Lines 
	\ref{4l30}-\ref{4l33}).
\end{itemize}

Upon resolving all sub-holes, the instantiated statement is integrated into $\mathcal{S}_{curr}$ using the state extension operator $\oplus$, yielding the next sketch $\mathcal{S}_{new}$ (Lines \ref{4l34}-\ref{4l35}). Finally, the algorithm returns $\mathcal{S}_{new}$ and the accumulated penalty $\Delta c$ to guide the macro-level pruning logic (Line \ref{4l36}).

\vspace{-3pt}
\subsection{Theoretical Analysis} 
\label{subsec:theoretical_analysis}
Before presenting the theoretical analysis, we assume the existence of a perfect SMT oracle. This idealized assumption ensures that the underlying solver is unconditionally sound and complete, allowing us to cleanly isolate and formally establish the theoretical guarantees of our algorithm.

\subsubsection{Soundness} 

In this section, we prove that the proposed scenario synthesis algorithm is sound with respect to the problem formulation (Definition \ref{Problem}). That is, if the algorithm successfully returns a scenario program, this program is guaranteed to not only geometrically encapsulate the input concrete scene, but also mathematically minimize the semantic penalty cost.

\begin{theorem}[Soundness]
	\label{thm:soundness}
	Given an input concrete scene $\sigma_{in}$ representing the deterministic state configuration of all $n$ entities, if Alg. \ref{alg:top_level} terminates and returns a scenario program $\mathcal{P}^*$, then $\mathcal{P}^*$ is the optimal consistent program. Formally, $\mathcal{P}^*$ satisfies both consistency and optimality:
	\begin{equation}
		\sigma_{in} \in \mathbb{D}(\mathcal{P}^*, \Omega_{\emptyset}) \quad \text{and} \quad \forall P \text{ s.t. } \sigma_{in} \in \mathbb{D}(P, \Omega_{\emptyset}), \ \mathcal{C}(\mathcal{P}^*) \le \mathcal{C}(P)
	\end{equation}
\end{theorem}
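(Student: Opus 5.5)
The proof splits into two parts: consistency and optimality.

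\textbf{Consistency.} We argue by induction on the number of instantiated statements that every partial sketch $\mathcal{S}_{curr}$ produced by Alg.~\ref{alg:search_controller_bfs} or Alg.~\ref{alg:search_controller} satisfies
\[
\sigma_{in}|_{E_{curr}\cup\{e_{ego}\}} \in \mathbb{D}(\mathcal{S}_{curr},\Omega_{\emptyset}),
\]
where we treat the ego as given by the initial frame. The base case is the empty sketch. For the step, consider a call to \textsc{SynthesizeStatement} (Alg.~\ref{alg:synthesize_statement}) that appends $e_{new}$ with reference $e_{ref}\in E_{ref}=\{e_{ego}\}\cup E_{curr}$. Every reference is either ego or already declared, so $s_{pos}$ and $s_{heading}$ are defined in each scene that extends $\sigma_{in}|_{E_{curr}}$. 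Under the perfect oracle, a \texttt{sat} answer from \textsc{SMTPosRelSolve} yields $a,b$ with $\sigma_{in}(e_{new}).p \in \mathbb{P}(rel\ e_{ref}\textbf{ by Range}(a,b), \sigma_{in}(e_{ref}))$ by Eq.~\eqref{eq:pos_semantics}. The local-XY fallback is always satisfiable, because the state is fixed and the projection coordinates lie in intervals of width $C_{dist}$ that the solver chooses to contain them (Eq.~\eqref{eq:local_xy_semantics}). The heading case is analogous via Eq.~\eqref{eq:heading_semantics}, using the actual position $p$ of $e_{new}$. By the definition of $\mathbb{V}$, the extended scene therefore lies in $\mathbb{V}(stmt,\cdot)$, and the invariant propagates through $\mathbb{D}$. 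At termination $E_{curr}=\mathcal{E}$, which gives $\sigma_{in}\in\mathbb{D}(\mathcal{P}^*,\Omega_{\emptyset})$.

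\textbf{Optimality.} First we characterize the search space. Any consistent $P$ declares each entity of $\mathcal{E}$ once, in some order $\pi$, with references restricted to previously declared entities or ego. Its cost decomposes as a sum of per-statement penalties, and each penalty depends only on the target entity, the chosen reference, and $\sigma_{in}$. Next we show that for a fixed order $\pi$, \textsc{SynthesizeStatement} attains the minimum per-statement penalty. Given target $e$ and available set $E_{curr}$, it returns position penalty $0$ iff some $e_{ref}\in\{e_{ego}\}\cup E_{curr}$ admits a directional relation whose constraint is satisfiable at $\sigma_{in}$; this relies on oracle completeness together with the fact that the range width is a free parameter $a$ with $b=a+C_{dist}$. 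Otherwise it returns $1$, which is the minimum because absolute coordinates cost $2$. The same argument applies to heading. Hence, denoting the cost returned by the algorithm along $\pi$ by $\mathcal{C}_{alg}(\pi)$, we have $\mathcal{C}_{alg}(\pi)\le\mathcal{C}(P)$ for every consistent $P$ with order $\pi$. One subtlety must be addressed here. The ranges in $P$ are arbitrary, while ours have fixed width. A directional relation satisfiable with some $[a,b]$ is satisfiable with a width-$C_{dist}$ interval containing the true distance, and we will state this explicitly. If needed, we will restrict the comparison class to the fixed-width fragment that the encoding assumes.

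It remains to show that the search returns $\min_\pi \mathcal{C}_{alg}(\pi)$. The key fact is that the incremental cost of appending $e$ depends only on the set $E_{curr}$, not on its order, and is nonnegative. Thus $\mathcal{C}_{alg}$ along a prefix is a shortest-path cost in the DAG over subsets of $\mathcal{E}$ with edges $E\to E\cup\{e\}$. For BFS, an induction on layers shows that $\mathcal{T}_{layer}[E]$ holds the minimum prefix cost over all orderings of $E$. This is exactly the dominance argument of Lemma~\ref{lemma1}, and at depth $n$ the single surviving key $\mathcal{E}$ carries the optimum. For DFS, we invoke Lemma~\ref{lemma1} for both prunings: branch-and-bound is justified by monotonicity, and prefix pruning discards only prefixes whose set was already reached with cost no greater. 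Every optimal ordering therefore has a surviving representative of equal cost, which either becomes $\mathcal{P}^*$ or ties with it.

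The main obstacle is the per-statement minimality step. The algorithm tries all references for high-level relations but falls back only to ego for relative coordinates, and the fixed-width ranges restrict the space of programs. I will handle this by showing that any fallback has penalty exactly $1$ regardless of the reference, so using ego loses nothing. I will also make explicit that the program class in Definition~\ref{Problem} is the fixed-width fragment generated by the encoding, or show that the width restriction cannot change satisfiability of directional and facing relations at a single concrete scene.
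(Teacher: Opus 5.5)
Your proof is correct and has the same skeleton as the paper's. Consistency is an induction over the declared statements: the type is copied from $\sigma_{in}$, and position and heading are certified by \texttt{sat} answers that use the reference states as constants. Optimality comes from exhaustive exploration of declaration orders plus the optimality-preservation of both prunings (Lemma~\ref{lemma1}).

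The real difference is in the optimality half. The paper stops at ``the search systematically explores the space of entity declarations and the pruning is safe.'' By itself, that only shows $\mathcal{P}^*$ is optimal among programs the search can build: one greedily synthesized statement per slot, ego-only fallbacks, and fixed-width ranges. It never connects this to the quantifier over \emph{all} consistent $P$.

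Your per-statement minimality step supplies exactly that connection. Fix an order and the set of available references.
\begin{itemize}
\item A directional relation is satisfiable iff its angular condition holds, because the distance range can always be placed around the true distance. Facing relations carry no range at all.
\item Any non-relational attribute costs at least $1$, and the ego-relative fallback attains $1$.
\end{itemize}
Hence $\mathcal{C}_{alg}(\pi)\le\mathcal{C}(P)$ for every consistent $P$ with order $\pi$. Your subset-lattice shortest-path view then turns the BFS table invariant into an explicit Bellman induction rather than an appeal to dominance. The paper's version buys brevity; yours actually proves the theorem as quantified.

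Two side notes. First, your hedge about restricting the comparison to the fixed-width fragment is unnecessary: your own observation shows the width cannot change satisfiability, so the theorem holds over all consistent $P$. Second, two of your remarks are real repairs rather than pedantry:
\begin{itemize}
\item Seeding the initial frame with the ego is needed, because the paper's induction starts from $\sigma_{\emptyset}$ while letting statements reference ego.
\item Noting that the fallbacks are always satisfiable is needed for consistency, because Alg.~\ref{alg:synthesize_statement} installs them without checking $sat$.
\end{itemize}
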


\begin{proof}
	The proof is bipartite: we first establish the consistency of the synthesized program, and subsequently prove its optimality.
	
	\textbf{Consistency ($\sigma_{in} \in \mathbb{D}(\mathcal{P}^*, \Omega_{\emptyset})$).}
	Let the synthesized program be a sequence of statements $\mathcal{P}^* = S_1\ S_2\ \dots\ S_n$, where each $S_i$ declares entity $e_i$. Let $\sigma_i$ denote the partial concrete scene restricting $\sigma_{in}$ to the first $i$ entities. We proceed by induction on the incremental construction of the scenario $\Omega_i = \mathbb{D}(S_1\ \dots\ S_i, \Omega_{\emptyset})$.
	
	\textit{Base Case:} Before any statement is evaluated, $\Omega_0 = \{\sigma_{\emptyset}\}$ and the 0-entity scene is $\sigma_0 = \sigma_{\emptyset}$. Thus, $\sigma_0 \in \Omega_0$ trivially holds.
	
	\textit{Inductive Step:} Assume that after synthesizing $i-1$ statements, the partial scene $\sigma_{i-1}$ is valid within the current scenario, \textit{i.e.}, $\sigma_{i-1} \in \Omega_{i-1}$. 
	For the $i$-th statement $S_i$ ($e_i = $\textbf{new}\ 
	$type_i\ pos_i, heading_i$), let the concrete state of 
	entity $e_i$ in the input scene be $\sigma_{in}(e_i) = 
	(t_i, p_i, \theta_i)$. Let $s_{pos} = 
	\sigma_{i-1}(ref_{pos})$ and $s_{heading} = 
	\sigma_{i-1}(ref_{heading})$ denote the states of the 
	reference entities specified in $pos_i$ and $heading_i$ 
	respectively (or undefined for absolute coordinates). 
	According to our relational statement semantics 
	$\mathbb{V}$, the updated partial scene $\sigma_i = 
	\sigma_{i-1} \cup \{e_i \mapsto (t_i, p_i, \theta_i)\}$ 
	belongs to $\Omega_i = \mathbb{V}(S_i, \Omega_{i-1})$ 
	if and only if:
	\begin{equation}
		t_i = \mathbb{T}(type_i) \ \land \ p_i \in 
		\mathbb{P}(pos_i, s_{pos}) \ \land \ \theta_i \in 
		\mathbb{H}(heading_i, s_{heading}, p_i)
	\end{equation}
	
	Alg. \ref{alg:synthesize_statement} invokes the SMT-based synthesizer to resolve these constraints, satisfying three conditions:
	\begin{itemize}[leftmargin=16pt]
		\item \textbf{Semantic Type ($t_i = \mathbb{T}(type_i)$):}
		The synthesizer directly extracts the exact entity type from $\sigma_{in}$; thus, the type equality strictly holds.

		\item \textbf{Position ($p_i \in \mathbb{P}(pos_i, 
		s_{pos})$):}
		The $pos_i$ is synthesized by querying the SMT 
		solver with the hypothesized positional relation. 
		Because the solver uses the exact coordinates of 
		the previously declared reference entities from 
		$\sigma_{in}$ (identically preserved in 
		$\sigma_{i-1}$) as known constants, a \texttt{SAT} 
		response guarantees that the synthesized boundaries 
		legally enclose $p_i$. Therefore, $p_i \in 
		\mathbb{P}(pos_i, s_{pos})$.

		\item \textbf{Heading ($\theta_i \in 
		\mathbb{H}(heading_i, s_{heading}, p_i)$):}
		Similarly, the heading relation $heading_i$ is 
		synthesized by encoding the geometric assertions in 
		Eq. \ref{eq:heading_semantics}. The solver utilizes 
		the reference entity's state and the newly 
		confirmed position $p_i$ as constants. The 
		instantiation of $heading_i$ only occurs upon a 
		\texttt{SAT} verification against the concrete 
		orientation $\theta_i$, ensuring $\theta_i \in 
		\mathbb{H}(heading_i, s_{heading}, p_i)$.
	\end{itemize}
	Since all conditions are satisfied, $\sigma_i \in \Omega_i$ holds for step $i$. By mathematical induction, upon evaluating the final statement $S_n$, the complete input scene $\sigma_n = \sigma_{in}$ is guaranteed to be a member of the final scenario $\Omega_n = \mathbb{D}(\mathcal{P}^*, \Omega_{\emptyset})$.
	
	\textbf{Optimality ($\mathcal{C}(\mathcal{P}^*) \le \mathcal{C}(P)$).}
	Having established the synthesized program $\mathcal{P}^*$ is consistent, we must prove it minimizes the cost function $\mathcal{C}$. Our search algorithm systematically explores the finite combinatorial space of entity declarations. Following Lemma \ref{lemma1}, our pruning strategies (Common Prefix Abstraction and Branch-and-Bound) are provably optimality-preserving. They exclusively discard infeasible, strictly suboptimal, or safely redundant partial programs, ensuring the globally optimal solution is never erroneously pruned. Consequently, upon termination, the consistent program $\mathcal{P}^*$ that the search identifies and returns strictly minimizes the semantic cost.
\end{proof}

\vspace{-2pt}
\subsubsection{Completeness}
In this section, we prove the completeness of the synthesis algorithm with respect to Definition \ref{Problem}. Our algorithm provides a strong guarantee: it will always terminate and successfully return the optimal consistent scenario program $\mathcal{P}^*$ for any given input scene.

\vspace{-2pt}
\begin{theorem}[Completeness]
	\label{thm:completeness}
	Given any concrete input scene $\sigma_{in}$ containing a finite set of $n$ entities, Alg. \ref{alg:top_level} is guaranteed to terminate and return the optimal consistent scenario program $\mathcal{P}^*$ as defined in Definition \ref{Problem}.
\end{theorem}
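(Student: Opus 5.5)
The proof has three parts: termination, existence of a returned program, and optimality of that program. The last part reduces to Theorem~\ref{thm:soundness}.

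\textbf{Termination.} The state-space tree has depth exactly $n$, since each expansion step moves one entity from $\mathcal{E}\setminus E_{curr}$ into $E_{curr}$. Each node has at most $n$ children, one per unassigned entity. The tree is therefore finite, with at most $\sum_{k=0}^{n} n!/(n-k)!$ nodes.

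\textsc{SynthesizeStatement} (Alg.~\ref{alg:synthesize_statement}) also terminates. It iterates over the finite reference set $E_{ref}$, at most $n$ entities, and over finitely many relations, making finitely many oracle calls. Under the perfect-oracle assumption each call returns. For BFS, every layer strictly increases $|E_{curr}|$, so after $n+1$ iterations $Q$ becomes empty: at full depth the loop over $\mathcal{E}\setminus E_{curr}$ is empty and $\mathcal{T}_{layer}=\{\}$. For DFS, every node is pushed at most once per parent expansion, so the stack eventually empties. Memoization in $\mathcal{M}_{smt}$ and the pruning checks can only reduce the work.

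\textbf{A consistent program is always produced.} The fallback branches of Alg.~\ref{alg:synthesize_statement} (lines \ref{4l16}--\ref{4l19} and \ref{4l30}--\ref{4l33}) always yield \texttt{SAT}. As the SMT encoding discussion notes, an explicit local Cartesian box relative to $e_{ego}$ with width $C_{dist}$ can always be chosen to contain $p_i$, and similarly a relative heading range of width $C_{head}$ can contain $\theta_i$. Hence every expansion produces a well-defined extended sketch with finite cost, and every path reaches a complete program.

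We then show the search retains at least one complete program:
\begin{itemize}
\item In BFS, the key $\mathcal{E}$ is inserted into $\mathcal{T}_{layer}$ at depth $n$, because prefix pruning keeps one representative per key and never deletes a key. That complete program is then assigned to $\mathcal{P}^*$ at line \ref{2l12}.
\item In DFS, the first complete state popped passes both checks, since $C^*=\infty$ and $\mathcal{E}$ is unseen in $\mathcal{T}_{global}$. So $\mathcal{P}^*\neq\perp$.
\end{itemize}
Consistency of the returned program then follows from the consistency part of Theorem~\ref{thm:soundness}.

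\textbf{Optimality.} Optimality is exactly the second conjunct of Theorem~\ref{thm:soundness}, which applies because termination with a returned program has now been established. That conjunct rests on Lemma~\ref{lemma1}.

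I expect the main obstacle to be justifying that the search space explored actually covers, up to cost, every consistent program $P$ in the DSL. Alg.~\ref{alg:synthesize_statement} instantiates each statement greedily, taking the first satisfiable high-level relation. It therefore does not enumerate all statements. The argument I would give is that the per-statement cost in Definition~\ref{costF} is minimal under this greedy choice, for a fixed set of previously declared entities. If any reference in $\{e_{ego}\}\cup E_{curr}$ admits a directional relation, cost $0$ is achieved. Otherwise no statement with that declared set can achieve a position penalty of $0$, and the penalty-$1$ fallback is the cheapest remaining option. The heading penalty is argued the same way.

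For any consistent $P$, consider the node of the tree whose declaration order is the order used in $P$. By induction on the prefix, its cost is at most $\mathcal{C}(P)$. Since the incremental cost depends only on the declared set, Lemma~\ref{lemma1} guarantees that some surviving node with cost at most $\mathcal{C}(P)$ persists through pruning. Hence $\mathcal{C}(\mathcal{P}^*)\le\mathcal{C}(P)$.
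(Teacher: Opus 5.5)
Your proof is correct and follows the paper's skeleton: a consistent program exists, the search is finite, and pruning is safe by Lemma~\ref{lemma1}. Two steps are argued differently.

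\textbf{Existence.} The paper argues existence abstractly, at the level of the DSL. Explicit numerical fallbacks can enclose any coordinate and angle, and the discreteness of $\mathcal{C}$ yields a minimum. It then simply asserts that pruning keeps ``a canonical optimal target program'' reachable in the search tree. You argue existence through the algorithm itself. The fallback branches of Alg.~\ref{alg:synthesize_statement} are always \texttt{SAT}, so every path completes, and you check that both BFS and DFS actually assign $\mathcal{P}^*\neq\perp$.

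\textbf{Coverage.} You add a step the paper never states. \textsc{SynthesizeStatement} fills each statement greedily, so the tree holds only one program per declaration order. One must therefore show that this program costs no more than any consistent DSL program using the same order. Your per-statement minimality argument does this:
\begin{itemize}
\item references are restricted to \textbf{ego} and earlier-declared entities;
\item penalty $0$ is attainable iff some such reference admits a high-level relation;
\item otherwise penalty $1$ is the floor, and the fallback achieves it.
\end{itemize}
This is exactly what justifies the paper's reachability claim. It also makes the separate existence-of-a-minimum step unnecessary, because $\mathcal{C}(\mathcal{P}^*)\le\mathcal{C}(P)$ for every consistent $P$ directly shows the returned program is a minimizer.

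The paper's version is shorter, but it leaves coverage implicit. So does the optimality half of Theorem~\ref{thm:soundness}, which you cite. Your extra step is therefore what makes that citation carry weight, rather than inheriting the same unstated assumption.

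When writing it out, state two facts explicitly. First, the fixed widths $C_{dist}$ and $C_{head}$ do not affect satisfiability of the relational primitives, since a width-$C_{dist}$ interval can always be placed around the true distance. Second, $\Delta c$ depends only on $(e_i, E_{curr})$, which is what makes Common Prefix Abstraction exact.
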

\vspace{-8pt}
\begin{proof}
	The proof relies on three foundational properties of the synthesis framework.
	
	\textbf{Existence Guarantee of a Consistent Program.}	
	Our DSL provides an absolute semantic cover: for any exact entity state $(t, p, \theta) \in \sigma_{in}$, it accommodates either high-level relative primitives or explicit numerical fallbacks that can formulate bounding regions encompassing any coordinate in $\mathbb{R}^2$ and angle in $[0, 360)$. Thus, the existence of at least one semantically consistent program is theoretically guaranteed. Given the discrete state space, a global cost minimum must inherently exist among these consistent candidates, ensuring the existence of the optimal target program $\mathcal{P}^*$.

	\textbf{Finite Abstraction of the Search Space.} 
	Although spatial parameters reside in the continuous real domain, our algorithm maintains a discrete search tree by delegating constraint resolution to the SMT solver. Assuming the solver acts as an idealized oracle returning satisfying assignments for consistent spatial primitives, the search complexity becomes strictly bounded by the finite number of entities and categorical relations, rendering the exploration finite and tractable.

	\textbf{Safety of Pruning Strategies.} 
	By Lemma \ref{lemma1}, our pruning strategy is rigorously safe. While aggressively evaluating semantic costs to discard strictly suboptimal or safely redundant branches, it is mathematically proven to never erroneously exhaust the space of consistent solutions. By strictly retaining at least one globally optimal pathway at every step, the strategy guarantees that a canonical optimal target program $\mathcal{P}^*$ remains persistently reachable within the search tree.
	
	In conclusion, because $\mathcal{P}^*$ inherently exists, the SMT delegation keeps the search space strictly finite, and the pruning guarantees the reachability of $\mathcal{P}^*$, the algorithm is complete. It will unconditionally terminate and return $\mathcal{P}^*$.
\end{proof}

\subsubsection{Deterministic Search Space Reduction in BFS}
Beyond soundness and completeness, the Common Prefix Abstraction yields a deterministic, strictly bounded state reduction specifically within the Breadth-First Search (BFS) framework, a theoretical guarantee absent in DFS.
\begin{theorem}[Complexity Bound of BFS with Common Prefix Abstraction]
	\label{thm:bfs_reduction}
	For a scene with $N = |\mathcal{E}|$ entities, the naive combinatorial search explores $O(N!)$ permutations. By applying Common Prefix Abstraction-based Pruning within layer-wise BFS, the maximum number of active states at depth $k$ is strictly bounded by $\binom{N}{k}$, reducing the overall search complexity (in terms of \textsc{SynthesizeStatement} invocations) to $O(N \cdot 2^{N-1})$.
\end{theorem}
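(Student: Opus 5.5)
The bound follows from a counting argument on the layer-wise structure of Alg.~\ref{alg:search_controller_bfs}; it needs no properties of the SMT oracle beyond the fact that each call to \textsc{SynthesizeStatement} counts as one unit of work.

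\textbf{Step 1: the invariant.} I will prove by induction on $k$ that, at the start of iteration $k$ of the while loop, the queue $Q$ holds at most one state for each $k$-element subset $E\subseteq\mathcal{E}$. Every such state $(\mathcal{S}_{curr},C_{curr})$ has exactly $k$ instantiated statements, and $\text{Entities}(\mathcal{S}_{curr})=E$.

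\begin{itemize}[leftmargin=16pt]
\item \emph{Base case.} At $k=0$ we have $Q=\{(\mathcal{S},0)\}$, a single state keyed by the empty set, and $\binom{N}{0}=1$.
\item \emph{Inductive step.} Every expansion at Lines~\ref{2l13}--\ref{2l22} adds exactly one undeclared entity, so each key $E_{next}$ has size $k+1$.
\item \emph{Table structure.} $\mathcal{T}_{layer}$ is a hash map keyed by $E_{next}$, and Line~\ref{2l22} overwrites the entry rather than appending to it. So after the layer is processed, $\mathcal{T}_{layer}$ has at most one entry per $(k+1)$-subset.
\end{itemize}

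Line~\ref{2l23} sets $Q\gets\mathcal{T}_{layer}$, so $|Q|\le\binom{N}{k+1}$ at the next iteration, which closes the induction. This already gives the stated bound on active states. The naive $O(N!)$ figure serves only as the comparison point: without the table, the number of distinct states at depth $N$ is the number of orderings, $N!$.

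\textbf{Step 2: counting the calls.} A state at depth $k$ calls \textsc{SynthesizeStatement} once for each $e_i\in\mathcal{E}\setminus E_{curr}$, i.e.\ $N-k$ times. Summing over depths gives
\[
\sum_{k=0}^{N-1}\binom{N}{k}(N-k).
\]
I will evaluate this with the identity $\binom{N}{k}(N-k)=N\binom{N-1}{k}$, which gives
\[
\sum_{k=0}^{N-1}\binom{N}{k}(N-k)=N\sum_{k=0}^{N-1}\binom{N-1}{k}=N\cdot 2^{N-1}.
\]
The same sum can be read as counting pairs (subset, element outside it), that is, the edges of the Boolean lattice between consecutive layers. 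At depth $N$ the remaining set is empty, so no further calls occur there. That term contributes $0$, and the loop terminates after $\mathcal{T}_{layer}$ comes out empty.

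\textbf{Main obstacle.} The delicate point is making sure no two queue entries share a key, because the bound relies entirely on $\mathcal{T}_{layer}$ being reset per layer and indexed by the unordered set. Two things need to be argued:
\begin{itemize}[leftmargin=16pt]
\item \emph{No carry-over.} Complete states (Line~\ref{2l10}) are not re-inserted, and states from different depths never mix, because Line~\ref{2l7} creates a fresh table for each layer.
\item \emph{Bound versus count.} The figure $O(N\cdot 2^{N-1})$ is a worst-case bound. The sum is attained only if every subset is reachable, which holds here because all orderings are admissible. The theorem asks only for an upper bound, so it suffices to say that the count is at most this sum.
\end{itemize}

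If SMT caching in $\mathcal{M}_{smt}$ is taken into account, the number of actual solver queries drops further, to $O(N^2)$ distinct keys. This is not needed for the claim, which is stated in terms of \textsc{SynthesizeStatement} invocations.
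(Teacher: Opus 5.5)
Your proposal is correct and follows the same route as the paper. The per-layer table $\mathcal{T}_{layer}$, keyed by the unordered entity set, caps the depth-$k$ frontier at $\binom{N}{k}$ states; each such state then makes $N-k$ calls to \textsc{SynthesizeStatement}, giving $\sum_{k=0}^{N-1}\binom{N}{k}(N-k)=N\cdot 2^{N-1}$. Your version is more careful than the paper's: you turn the queue invariant into an explicit induction, and you supply the identity $\binom{N}{k}(N-k)=N\binom{N-1}{k}$, which the paper uses without derivation.
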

\vspace{-8pt}

\begin{proof}
	At any search depth $k \in \{0, \dots, N\}$, a naive search maintains $P(N, k) = \frac{N!}{(N-k)!}$ active permutations. However, our abstraction defines states as equivalent if they share the same unordered subset of $k$ declared entities. Because BFS strictly expands layer-by-layer, it exhaustively evaluates all depth-$k$ permutations before proceeding. The prefix table $\mathcal{T}_{layer}$ deterministically collapses these into at most $\binom{N}{k}$ unique canonical states (the combinations of $N$ entities taken $k$ at a time), retaining only the optimal path for each subset.
	
	Consequently, when expanding from depth $k$ to $k+1$, 
	each of the bounded $\binom{N}{k}$ states attempts to 
	instantiate the remaining $(N-k)$ unassigned entities. 
	The maximum total number of 
	\textsc{SynthesizeStatement} invocations across the 
	entire search is exactly bounded by:
	$$\sum_{k=0}^{N-1} \binom{N}{k} (N - k) = N \cdot 
	2^{N-1}$$
	This mathematically proves a deterministic complexity reduction from a factorial bound $O(N!)$ to an exponential bound $O(N \cdot 2^{N-1})$.
\end{proof}

\vspace{-2mm}
\subsection{Discussion}
\label{subsec:discussion}

\subsubsection{Trade-offs Between BFS and DFS}
\label{subsubsec:tradeoffs}
While Theorem \ref{thm:bfs_reduction} establishes a strict mathematical bound for BFS, DFS offers complementary advantages, creating a distinct time-memory trade-off.

\textit{1) Time Complexity. } BFS guarantees deterministic branch reduction via Common Prefix Abstrac\-tion-based Pruning. Conversely, DFS is highly sensitive to exploration order; reaching a subset via a sub-optimal path early can trigger redundant sub-tree exploration. However, DFS offsets this theoretical vulnerability by rapidly diving to leaf nodes. This quickly establishes a tight global upper bound $C^*$, empowering aggressive Branch-and-Bound pruning for subsequent branches.

\textit{2) Space Complexity. } BFS isolates memory overhead to the current depth. Its queue $Q$ and layer-wise prefix table $\mathcal{T}_{layer}$ rebuild iteratively, peaking at $O(\binom{N}{\lfloor N/2 \rfloor})$. In contrast, while DFS features a lightweight $O(N)$ call stack, its global prefix table $\mathcal{T}_{global}$ must permanently cache visited states across all depths. Consequently, this global map grows monotonically, demanding up to $O(2^N)$ memory in the worst case.

\vspace{-3pt}
\subsubsection{Alternative Search Algorithms}
\label{subsubsec:other_algorithms}
In addition to BFS and DFS, other search algorithms can also be integrated into our framework. For instance, A* search~\cite{A-star}, a classical best-first search algorithm, can be combined with Prefix Abstraction-based Pruning to guide the exploration towards lower-cost programs. Similar to BFS, A* can leverage the prefix abstraction to merge equivalent states and reduce redundant exploration. Prior work has successfully applied A* to program synthesis with synthesis-specific heuristics~\cite{shah2020admissible, mell2024optimal}. We empirically evaluate and compare these alternatives in Section \ref{subsec:rq2_results}.

	%

\section{Evaluation}
This section evaluates the effectiveness and efficiency of our proposed method. 
\highlight{We compare our approach against several LLM-based baselines. For the efficiency evaluation, we conduct an internal ablation of B\&B and CPA within the DFS- and BFS-based configurations and compare these configurations with A*-based search\cite{A-star}.}
Our evaluation aims to answer the following research questions:
\begin{itemize}[leftmargin=10pt]
	\item \textbf{RQ1} How effective is our method in synthesizing scenario programs compared to existing approaches?
	\item \textbf{RQ2} \highlight{How efficient is our proposed algorithm, how effectively do the two pruning strategies improve the computational scalability, and how does it compare with A*-based search? }
\end{itemize}
\noindent\textbf{\textit{Benchmark}}. 
To address RQ1 and RQ2, we derive 20 scenario generation tasks from the annotated, real-world nuScenes \cite{nuscenes} dataset. To systematically evaluate performance across varying levels of difficulty, these scenes are structured to feature a strictly increasing number of entities, scaling incrementally from a relatively simple configuration of 5 entities up to a highly complex scene containing 24 entities.

\noindent\textbf{\textit{Implementation and Experimental Setup}}.
Our synthesis algorithm leverages dReal \cite{dReal} as the underlying constraint solver to handle nonlinear formulas over the reals. Specifically, dReal implements the framework of $\delta$-complete decision procedures \cite{DeltaComplete}. This allows the tool to systematically evaluate various nonlinear real functions, including the trigonometric functions that are essential for our spatial layout synthesis. Since the baseline approaches primarily rely on cloud-based LLM API invocations, their effectiveness evaluations (RQ1) are hardware-independent. 
To rigorously answer RQ2, all efficiency and ablation experiments were executed sequentially in isolated Docker containers running on a single Linux server (AMD RT PRO 7995WX CPU), with each explicitly constrained to 4 cores and 16GB of RAM.

\subsection{Results of RQ1}
\label{subsec:rq1_results}
To address RQ1 and evaluate our method's effectiveness, we compare our approach against the following three scenario program synthesis techniques.
\begin{itemize}[leftmargin=10pt]
	\item \textbf{ChatScene:} We select ChatScene \cite{ChatScene} as a baseline, a recent work that leverages LLMs and a pre-built knowledge base to synthesize Scenic code from natural language. 
	\highlight{Its direct assembly mode is designed around a pre-built
		knowledge base and therefore cannot directly cover unseen layouts that
		are absent from that knowledge base. We consequently use the alternative
		few-shot prompting mode provided by ChatScene, which retrieves relevant
		code examples while allowing new Scenic programs to be generated for the
		layouts in our benchmark.}
	
	\item \textbf{ScenicNL:} An existing work \cite{scenicNL} that utilizes a compositional prompting strategy to generate Scenic code from natural language. Specifically, it integrates various advanced techniques, such as Tree of Thoughts, few-shot prompting, constrained decoding, and a compiler-in-the-loop approach, to fully leverage the generation capabilities of LLMs.
	
	\item \textbf{DirectLLMGen:} We employ a multimodal LLM to directly synthesize scenario programs. The input context consists exclusively of a raw scene image, its corresponding label data file, and a supplementary image with rendered labels, deliberately omitting any textual description of the scene itself. Instead, our carefully designed natural language prompt is used strictly to enforce the syntax format of the target scenario language. This setup forces the LLM to rely on multimodal reasoning abilities to extract spatial layout from the visually augmented inputs, outputting the final program according to the predefined syntactic constraints.
\end{itemize}

To evaluate the LLM-based baselines, we utilize three frontier models: \texttt{GPT-4o}, \texttt{GPT-5.4}, and \texttt{Gemini-3.1-Pro}. Since ScenicNL's LMQL-based constrained decoding \cite{LMQL} requires token-level logit control unavailable in modern chat APIs, we adopt its alternative configuration\footnote{We utilize ScenicNL's \texttt{PREDICT\_TOT\_THEN\_HYDE} configuration (combining ToT and few-shot prompting). Additionally, we ported the authors' compiler-in-the-loop feedback mechanism into this pipeline.}. 
For our proposed method, we employ the BFS strategy equipped with Common Prefix Abstraction-based Pruning.


\highlight{In our evaluation, all methods are provided with the same
	nuScenes scene labels, including the entity types, positions, and
	headings. For \methodname, these labels directly constitute the
	structured concrete scene. DirectLLMGen receives the raw image together
	with the label file and an image with rendered labels, while ChatScene
	and ScenicNL receive textual descriptions containing the same label
	information. All LLM-based configurations are evaluated with at most
	five generation attempts per task: ChatScene and DirectLLMGen are run
	five times, and ScenicNL uses one initial generation followed by at most
	four compiler-feedback retries.}

Table \ref{tab:rq1_esr} presents our experimental results evaluated across syntax correctness, semantic correctness, and program cost. 
Regarding \textbf{syntax correctness}, all programs are strictly validated via the official Scenic compiler. Our method, DirectLLMGen, and ScenicNL achieve near-perfect syntax pass rates, whereas ChatScene exhibits high variance. Our symbolic synthesis fundamentally guarantees strict syntax by design. DirectLLMGen and ScenicNL ensure validity via strict prompt constraints and ToT with compiler-in-the-loop feedback, respectively. Conversely, lacking explicit formatting or feedback loops, ChatScene relies solely on few-shot examples, making its performance highly unstable and dependent on the LLM's prior knowledge.

\begin{table}[t]
	\scriptsize
	\centering
	\caption{Evaluation results of RQ1. The metrics in each cell represent \textbf{Syntax Correct / Semantic Correct / Average Program Cost} (Syn / Sem / Cost). For LLM-based methods, syntax success is achieved if at least one out of five attempts generates valid code. Semantic success is subsequently evaluated only on syntactically correct outputs, requiring at least one attempt to satisfy the semantic criteria. The average semantic cost is then computed exclusively over these semantically successful attempts. $\checkmark$ indicates correct, $\times$ indicates failure, and - indicates not applicable.}
	\label{tab:rq1_esr}
	\vspace{-2pt}
	\resizebox{\linewidth}{!}{
		\begin{tabular}{l ccc ccc ccc c}
			\toprule
			\multirow{2}{*}{\textbf{Task ID (\# Ent.)}} & \multicolumn{3}{c}{\textbf{DirectLLMGen}} & \multicolumn{3}{c}{\textbf{ChatScene}} & \multicolumn{3}{c}{\textbf{ScenicNL}} & \multirow{2}{*}{\textbf{Ours}} \\
			\cmidrule(lr){2-4} \cmidrule(lr){5-7} \cmidrule(lr){8-10}
			& \textbf{GPT-4o} & \textbf{GPT-5.4} & \textbf{Gemini} & \textbf{GPT-4o} & \textbf{GPT-5.4} & \textbf{Gemini} & \textbf{GPT-4o} & \textbf{GPT-5.4} & \textbf{Gemini} & \\
			\midrule
			Task 1 (5)   & $\checkmark$ / $\times$ / - & $\checkmark$ / $\times$ / - & $\checkmark$ / $\checkmark$ / 10.0 & $\times$ / - / - & $\checkmark$ / $\times$ / - & $\times$ / - / - & $\checkmark$ / $\checkmark$ / 20.0 & $\checkmark$ / $\checkmark$ / 20.0 & $\checkmark$ / $\checkmark$ / 20.0 & $\checkmark$ / $\checkmark$ / 0 \\
			Task 2 (6)   & $\checkmark$ / $\times$ / - & $\checkmark$ / $\times$ / - & $\checkmark$ / $\times$ / - & $\checkmark$ / $\times$ / - & $\checkmark$ / $\times$ / - & $\checkmark$ / $\times$ / - & $\checkmark$ / $\checkmark$ / 24.0 & $\checkmark$ / $\checkmark$ / 24.0 & $\checkmark$ / $\checkmark$ / 24.0 & $\checkmark$ / $\checkmark$ / 5.0 \\
			Task 3 (7)   & $\checkmark$ / $\times$ / - & $\checkmark$ / $\times$ / - & $\checkmark$ / $\times$ / - & $\times$ / - / - & $\checkmark$ / $\times$ / - & $\times$ / - / - & $\checkmark$ / $\checkmark$ / 28.0 & $\checkmark$ / $\checkmark$ / 28.0 & $\checkmark$ / $\checkmark$ / 28.0 & $\checkmark$ / $\checkmark$ / 1.0 \\
			Task 4 (8)   & $\checkmark$ / $\times$ / - & $\checkmark$ / $\times$ / - & $\checkmark$ / $\checkmark$ / 32.0 & $\times$ / - / - & $\checkmark$ / $\times$ / - & $\checkmark$ / $\times$ / - & $\checkmark$ / $\checkmark$ / 32.0 & $\checkmark$ / $\checkmark$ / 32.0 & $\checkmark$ / $\checkmark$ / 32.0 & $\checkmark$ / $\checkmark$ / 7.0 \\
			Task 5 (9)   & $\checkmark$ / $\times$ / - & $\checkmark$ / $\times$ / - & $\checkmark$ / $\checkmark$ / 36.0 & $\times$ / - / - & $\checkmark$ / $\times$ / - & $\checkmark$ / $\times$ / - & $\checkmark$ / $\checkmark$ / 36.0 & $\checkmark$ / $\checkmark$ / 36.0 & $\checkmark$ / $\checkmark$ / 36.0 & $\checkmark$ / $\checkmark$ / 3.0 \\
			Task 6 (10)  & $\checkmark$ / $\times$ / - & $\checkmark$ / $\times$ / - & $\checkmark$ / $\times$ / - & $\times$ / - / - & $\checkmark$ / $\times$ / - & $\checkmark$ / $\times$ / - & $\checkmark$ / $\checkmark$ / 40.0 & $\checkmark$ / $\checkmark$ / 40.0 & $\checkmark$ / $\checkmark$ / 40.0 & $\checkmark$ / $\checkmark$ / 1.0 \\
			Task 7 (11)  & $\checkmark$ / $\times$ / - & $\checkmark$ / $\times$ / - & $\checkmark$ / $\times$ / - & $\times$ / - / - & $\checkmark$ / $\times$ / - & $\checkmark$ / $\times$ / - & $\times$ / - / - & $\checkmark$ / $\checkmark$ / 44.0 & $\checkmark$ / $\checkmark$ / 44.0 & $\checkmark$ / $\checkmark$ / 8.0 \\
			Task 8 (12)  & $\checkmark$ / $\times$ / - & $\checkmark$ / $\times$ / - & $\checkmark$ / $\times$ / - & $\checkmark$ / $\times$ / - & $\checkmark$ / $\times$ / - & $\checkmark$ / $\times$ / - & $\checkmark$ / $\checkmark$ / 48.0 & $\checkmark$ / $\checkmark$ / 48.0 & $\checkmark$ / $\checkmark$ / 48.0 & $\checkmark$ / $\checkmark$ / 6.0 \\
			Task 9 (13)  & $\checkmark$ / $\times$ / - & $\checkmark$ / $\times$ / - & $\checkmark$ / $\times$ / - & $\times$ / - / - & $\checkmark$ / $\times$ / - & $\times$ / - / - & $\checkmark$ / $\checkmark$ / 52.0 & $\checkmark$ / $\checkmark$ / 52.0 & $\checkmark$ / $\checkmark$ / 52.0 & $\checkmark$ / $\checkmark$ / 9.0 \\
			Task 10 (14) & $\checkmark$ / $\times$ / - & $\checkmark$ / $\times$ / - & $\checkmark$ / $\checkmark$ / 56.0 & $\times$ / - / - & $\checkmark$ / $\times$ / - & $\times$ / - / - & $\checkmark$ / $\checkmark$ / 56.0 & $\checkmark$ / $\checkmark$ / 56.0 & $\checkmark$ / $\checkmark$ / 56.0 & $\checkmark$ / $\checkmark$ / 4.0 \\
			Task 11 (15) & $\checkmark$ / $\times$ / - & $\checkmark$ / $\times$ / - & $\checkmark$ / $\times$ / - & $\times$ / - / - & $\checkmark$ / $\times$ / - & $\times$ / - / - & $\checkmark$ / $\checkmark$ / 60.0 & $\checkmark$ / $\checkmark$ / 60.0 & $\checkmark$ / $\checkmark$ / 60.0 & $\checkmark$ / $\checkmark$ / 5.0 \\
			Task 12 (16) & $\checkmark$ / $\times$ / - & $\checkmark$ / $\times$ / - & $\checkmark$ / $\times$ / - & $\times$ / - / - & $\checkmark$ / $\times$ / - & $\times$ / - / - & $\checkmark$ / $\checkmark$ / 64.0 & $\checkmark$ / $\checkmark$ / 64.0 & $\checkmark$ / $\checkmark$ / 64.0 & $\checkmark$ / $\checkmark$ / 13.0 \\
			Task 13 (17) & $\checkmark$ / $\times$ / - & $\checkmark$ / $\times$ / - & $\checkmark$ / $\times$ / - & $\checkmark$ / $\times$ / - & $\checkmark$ / $\times$ / - & $\checkmark$ / $\times$ / - & $\checkmark$ / $\checkmark$ / 68.0 & $\checkmark$ / $\checkmark$ / 68.0 & $\checkmark$ / $\checkmark$ / 68.0 & $\checkmark$ / $\checkmark$ / 8.0 \\
			Task 14 (18) & $\checkmark$ / $\times$ / - & $\checkmark$ / $\times$ / - & $\checkmark$ / $\times$ / - & $\times$ / - / - & $\checkmark$ / $\times$ / - & $\checkmark$ / $\times$ / - & $\checkmark$ / $\checkmark$ / 72.0 & $\checkmark$ / $\checkmark$ / 72.0 & $\checkmark$ / $\checkmark$ / 72.0 & $\checkmark$ / $\checkmark$ / 7.0 \\
			Task 15 (19) & $\checkmark$ / $\times$ / - & $\checkmark$ / $\times$ / - & $\checkmark$ / $\times$ / - & $\times$ / - / - & $\checkmark$ / $\times$ / - & $\checkmark$ / $\times$ / - & $\checkmark$ / $\checkmark$ / 76.0 & $\checkmark$ / $\checkmark$ / 76.0 & $\checkmark$ / $\checkmark$ / 76.0 & $\checkmark$ / $\checkmark$ / 7.0 \\
			Task 16 (20) & $\checkmark$ / $\times$ / - & $\checkmark$ / $\times$ / - & $\checkmark$ / $\times$ / - & $\times$ / - / - & $\checkmark$ / $\times$ / - & $\times$ / - / - & $\checkmark$ / $\checkmark$ / 80.0 & $\checkmark$ / $\checkmark$ / 80.0 & $\checkmark$ / $\checkmark$ / 80.0 & $\checkmark$ / $\checkmark$ / 0.0 \\
			Task 17 (21) & $\checkmark$ / $\times$ / - & $\checkmark$ / $\times$ / - & $\checkmark$ / $\times$ / - & $\times$ / - / - & $\checkmark$ / $\times$ / - & $\times$ / - / - & $\checkmark$ / $\checkmark$ / 84.0 & $\checkmark$ / $\checkmark$ / 84.0 & $\checkmark$ / $\checkmark$ / 84.0 & $\checkmark$ / $\checkmark$ / 12.0 \\
			Task 18 (22) & $\checkmark$ / $\times$ / - & $\checkmark$ / $\times$ / - & $\checkmark$ / $\times$ / - & $\times$ / - / - & $\checkmark$ / $\times$ / - & $\times$ / - / - & $\checkmark$ / $\checkmark$ / 88.0 & $\checkmark$ / $\checkmark$ / 88.0 & $\checkmark$ / $\checkmark$ / 88.0 & $\checkmark$ / $\checkmark$ / 0.0 \\
			Task 19 (23) & $\checkmark$ / $\times$ / - & $\checkmark$ / $\times$ / - & $\checkmark$ / $\times$ / - & $\times$ / - / - & $\checkmark$ / $\times$ / - & $\checkmark$ / $\times$ / - & $\checkmark$ / $\checkmark$ / 92.0 & $\checkmark$ / $\checkmark$ / 92.0 & $\checkmark$ / $\checkmark$ / 92.0 & $\checkmark$ / $\checkmark$ / 10.0 \\
			Task 20 (24) & $\checkmark$ / $\times$ / - & $\checkmark$ / $\times$ / - & $\checkmark$ / $\times$ / - & $\times$ / - / - & $\checkmark$ / $\times$ / - & $\times$ / - / - & $\checkmark$ / $\checkmark$ / 96.0 & $\checkmark$ / $\checkmark$ / 96.0 & $\checkmark$ / $\checkmark$ / 96.0 & $\checkmark$ / $\checkmark$ / 7.0 \\
			\midrule
			\textbf{Overall} & 100.0\% / 0.0\% / - & 100.0\% / 0.0\% / - & 100.0\% / 20.0\% / 33.5 & 15.0\% / 0.0\% / - & 100.0\% / 0.0\% / - & 50.0\% / 0.0\% / - & 95.0\% / 95.0\% / 58.7 & 100.0\% / 100.0\% / 58.0 & 100.0\% / 100.0\% / 58.0 & \textbf{100.0\% / 100.0\% / 5.7} \\
			\bottomrule
		\end{tabular}
	}
\end{table}

For syntactically valid programs, we evaluate \textbf{semantic correctness} (defined as whether the generated program's distribution space accurately encompasses the input scene). We use an automated evaluation pipeline for our method and DirectLLMGen (restricted to a Scenic subset), while ChatScene and ScenicNL (using full syntax) require a two-step evaluation: an initial automated check to ensure the compiled entity count matches the input (as any mismatch strictly invalidates the scene), followed by expert assessment. Results show our method and ScenicNL achieve complete semantic correctness, whereas DirectLLMGen only occasionally succeeds using \texttt{Gemini-3.1-Pro}. 
Crucially, all baselines achieve this merely by transcribing absolute coordinates from the labels. Even the sole exception (DirectLLMGen with \texttt{Gemini-3.1-Pro} on Task 1) simply calculated local relative positions. 

To quantify this, we evaluate the \textbf{program cost} (Def. \ref{costF}). As theoretically established, our symbolic synthesis guarantees the minimum-cost program. 
\highlight{Under our cost function, a lower cost indicates that a program relies more on relational descriptions and less on absolute coordinates. The evaluated baselines frequently reproduce the input coordinates directly, tying their outputs to the original coordinate frame. In contrast, \methodname\ generates relational and re-samplable programs that preserve the encoded spatial layout while allowing controlled scenario variations. We discuss the practical implications of these programs in Section~\ref{sec:discussion}.}

\vspace{2ex}
\noindent \textbf{Impact of Natural Language Bias (User 
Study).} The RQ1 evaluation above used 
expert-crafted 
prompts with explicit absolute coordinates. However, 
real-world user descriptions are inherently abstract and 
relational. To evaluate this bias, 8 computer science 
graduate students were provided
\begin{wraptable}{rt}{0.55\linewidth}
	\vspace{-2pt} %
	\scriptsize
	\centering
	\caption{Comparison of generation performance between 
	Expert-crafted descriptions (Expert) and User-generated 
	descriptions (User). For Expert, results indicate 
	success or failure. For User, results represent the 
	average pass rate across 8 participants. `Syn' and 
	`Sem' denote Syntax and Semantic outcomes, 
	respectively.}
	\label{tab:user_study}
	\vspace{-2pt} 
	\resizebox{0.75\linewidth}{!}{
		\begin{tabular}{ll cc cc | cc cc}
			\toprule
			\multirow{3}{*}{\textbf{Task}} & 
			\multirow{3}{*}{\textbf{Model}} & 
			\multicolumn{4}{c|}{\textbf{ChatScene}} & 
			\multicolumn{4}{c}{\textbf{ScenicNL}} \\
			\cmidrule(lr){3-6} \cmidrule(lr){7-10}
			& & \multicolumn{2}{c}{\textbf{Expert}} & 
			\multicolumn{2}{c|}{\textbf{User}} & 
			\multicolumn{2}{c}{\textbf{Expert}} & 
			\multicolumn{2}{c}{\textbf{User}} \\
			\cmidrule(lr){3-4} \cmidrule(lr){5-6} 
			\cmidrule(lr){7-8} \cmidrule(lr){9-10}
			& & \textbf{Syn} & \textbf{Sem} & \textbf{Syn} 
			& \textbf{Sem} & \textbf{Syn} & \textbf{Sem} & 
			\textbf{Syn} & \textbf{Sem} \\
			\midrule
			\multirow{3}{*}{Task 1} 
			& \texttt{GPT-4o}   & $\times$ & - & 0.0\% & 
			0.0\% & $\checkmark$ & $\checkmark$ & 75.0\% & 
			0.0\% \\
			& \texttt{GPT-5.4}  & $\checkmark$ & $\times$ & 
			87.5\% & 0.0\% & $\checkmark$ & $\checkmark$ & 
			87.5\% & 0.0\% \\
			& \texttt{Gemini-3.1-Pro}   & $\times$ & - & 
			37.5\% & 0.0\% & $\checkmark$ & $\checkmark$ & 
			62.5\% & 0.0\% \\
			\midrule
			\multirow{3}{*}{Task 5} 
			& \texttt{GPT-4o}   & $\times$ & - & 12.5\% & 
			0.0\% & $\checkmark$ & $\checkmark$ & 50.0\% & 
			0.0\% \\
			& \texttt{GPT-5.4}  & $\checkmark$ & $\times$ & 
			100.0\% & 0.0\% & $\checkmark$ & $\checkmark$ & 
			75.0\% & 0.0\% \\
			& \texttt{Gemini-3.1-Pro}   & $\checkmark$ & 
			$\times$ & 37.5\% & 0.0\% & $\checkmark$ & 
			$\checkmark$ & 75.0\% & 0.0\% \\
			\midrule
			\multirow{3}{*}{Task 10} 
			& \texttt{GPT-4o}   & $\times$ & - & 12.5\% & 
			0.0\% & $\checkmark$ & $\checkmark$ & 50.0\% & 
			0.0\% \\
			& \texttt{GPT-5.4}  & $\checkmark$ & $\times$ & 
			75.0\% & 0.0\% & $\checkmark$ & $\checkmark$ & 
			87.5\% & 0.0\% \\
			& \texttt{Gemini-3.1-Pro}   & $\times$ & - & 
			62.5\% & 0.0\% & $\checkmark$ & $\checkmark$ & 
			87.5\% & 0.0\% \\
			\midrule
			\multirow{3}{*}{Task 15} 
			& \texttt{GPT-4o}   & $\times$ & - & 12.5\% & 
			0.0\% & $\checkmark$ & $\checkmark$ & 75.0\% & 
			0.0\% \\
			& \texttt{GPT-5.4}  & $\checkmark$ & $\times$ & 
			100.0\% & 0.0\% & $\checkmark$ & $\checkmark$ & 
			87.5\% & 0.0\% \\
			& \texttt{Gemini-3.1-Pro}   & $\checkmark$ & 
			$\times$ & 50.0\% & 0.0\% & $\checkmark$ & 
			$\checkmark$ & 87.5\% & 0.0\% \\
			\midrule
			\multirow{3}{*}{Task 20} 
			& \texttt{GPT-4o}   & $\times$ & - & 12.5\% & 
			0.0\% & $\checkmark$ & $\checkmark$ & 50.0\% & 
			0.0\% \\
			& \texttt{GPT-5.4}  & $\checkmark$ & $\times$ & 
			87.5\% & 0.0\% & $\checkmark$ & $\checkmark$ & 
			100.0\% & 0.0\% \\
			& \texttt{Gemini-3.1-Pro}   & $\times$ & - & 
			25.0\% & 0.0\% & $\checkmark$ & $\checkmark$ & 
			62.5\% & 0.0\% \\
			\bottomrule
		\end{tabular}
	}
	\vspace{-2pt}
\end{wraptable}   
with raw scene images and 
corresponding labels, and asked to describe the spatial 
layouts of 5 representative scenes in natural language 
without any structural constraints.

As Table \ref{tab:user_study} shows, ChatScene's syntax generation remains highly model-dependent. Notably, ScenicNL's syntax success rate declined significantly; lacking direct coordinates to copy, it struggled to generate valid complex structures. 
Crucially, the semantic performance of both baselines deteriorated sharply. 
\highlight{This shows that, under the evaluated user-generated
	descriptions, the LLM-based configurations did not generate programs	
	that correctly captured the corresponding spatial layouts.}


\vspace{1ex}
\noindent\fbox{%
	\parbox{\dimexpr\linewidth-2\fboxsep-2\fboxrule\relax}{%
		\highlight{\textbf{Answer to RQ1.} Under the evaluated inputs,
			\methodname\ achieves complete syntax and semantic correctness on
			all 20 scenes, whereas the LLM-based baselines attain such
			correctness only by transcribing absolute coordinates. Under our
			cost function, \methodname\ produces substantially lower-cost, and therefore more relational and re-samplable programs than the
			baselines.}
	}%
}

\subsection{Results of RQ2}
\label{subsec:rq2_results}

\highlight{RQ2 evaluates the efficiency and scalability of our synthesis
	algorithms from two perspectives. We first conduct an ablation
	study to quantify the effects of Branch-and-Bound (B\&B) and Common
	Prefix Abstraction-based Pruning (CPA) on the DFS- and BFS-based
	configurations. We then compare these configurations with A* search as an alternative to explore the same synthesis space.}

\highlight{We measure execution time and peak memory consumption across
	the 20 tasks. The evaluation includes nine configurations: one
	monolithic SMT baseline, six DFS/BFS configurations for the internal
	ablation, and two A*-based comparison configurations.}
\vspace*{-4pt}
\begin{enumerate}[label=\arabic*), leftmargin=22pt]
	\item \textbf{Pure-SMT}: Our custom baseline encoding both combinatorial permutations and geometric constraints into a single monolithic SMT formula.
	\item \textbf{DFS-None}: A naive Depth-First Search baseline without any pruning.
	\item \textbf{BFS-None}: A naive Breadth-First Search baseline without any pruning.
	\item \textbf{DFS-B\&B}: DFS with only B\&B.
	\item \textbf{DFS-CPA}: DFS with only CPA.
	\item \textbf{DFS-B\&B-CPA}: DFS equipped with both B\&B and CPA.
	\item \textbf{BFS-CPA}: BFS equipped with CPA.
	\item \highlight{\textbf{A*}: An A* search baseline without CPA.}
	\item \highlight{\textbf{A*+CPA}: A hybrid comparison configuration
		that combines A* with CPA.}
\end{enumerate}
\vspace*{-4pt}

\highlight{\textbf{A*-based comparison.} Prior work has applied
	A* to program synthesis using synthesis-specific heuristics. Shah et
	al.~\cite{shah2020admissible} derive an approximately admissible neural
	heuristic from a continuous relaxation to guide A* and
	iterative-deepening search, while Mell et
	al.~\cite{mell2024optimal} derive an abstract-interpretation-based
	heuristic that bounds the best achievable objective within subtrees of
	partial programs. Motivated by these studies, we instantiate A* for our
	synthesis space using $f=g+h$, where $g$ is the accumulated cost of the
	current partial program and $h$ is a relaxed lower bound on the cost of
	declaring the remaining entities. For each undeclared entity, the
	relaxation permits references to any entity in the scene and uses the
	minimum possible declaration cost. Because these relaxed choices include
	all references available to the actual search, $h$ cannot overestimate
	the true remaining cost and is admissible. Pure A* retains different
	declaration orders as distinct states, whereas A*+CPA merges equivalent
	prefixes with the same declared entity set.}

\begin{table*}[t]
	\centering
	\scriptsize
	\caption{Detailed Performance across All Tasks ($N=5$ to $24$), where $N$ denotes the number of entities within the scene, reflecting the increasing complexity of the scenarios. All reported metrics are averaged over 5 independent runs. TO = Timeout ($>$ 7200s), OOM = Out of Memory ($>$ 8GB).}
	\vspace{-10pt}
	\label{tab:all_tasks}
	\renewcommand{\arraystretch}{1.0}
	\resizebox{\textwidth}{!}{
		\setlength{\tabcolsep}{2pt}
		\begin{tabular}{@{}c|c|rr|rr|rr|rr|rr|rr|rr|rr|rr@{}}
			\toprule
			\multirow{2}{*}{\textbf{Task}} & \multirow{2}{*}{$N$} & \multicolumn{2}{c|}{\textbf{Pure-SMT}} & \multicolumn{2}{c|}{\textbf{DFS-None}} & \multicolumn{2}{c|}{\textbf{BFS-None}} & \multicolumn{2}{c|}{\textbf{DFS-B\&B}} & \multicolumn{2}{c|}{\textbf{DFS-CPA}} & \multicolumn{2}{c|}{\textbf{DFS-B\&B-CPA}} & \multicolumn{2}{c|}{\textbf{BFS-CPA}} & \multicolumn{2}{c|}{\highlight{\textbf{A*}}} & \multicolumn{2}{c}{\highlight{\textbf{A*+CPA}}} \\
			& & Time (s) & Mem (MB) & Time (s) & Mem (MB) & Time (s) & Mem (MB) & Time (s) & Mem (MB) & Time (s) & Mem (MB) & Time (s) & Mem (MB) & Time (s) & Mem (MB) & \highlight{Time (s)} & \highlight{Mem (MB)} & \highlight{Time (s)} & \highlight{Mem (MB)} \\
			\midrule
			Task 1 & 5 & TO & 1449.9 & 0.47 & 43.7 & 0.44 & 43.7 & 0.43 & 43.8 & 0.50 & 46.0 & \textbf{0.40} & 47.6 & 0.47 & 43.6 & \highlight{0.42} & \highlight{\textbf{9.63}} & \highlight{0.42} & \highlight{9.66} \\
			Task 2 & 6 & TO & 2700.1 & 1.09 & 45.1 & 0.93 & 44.6 & 1.16 & 50.5 & 1.11 & 48.5 & 1.01 & 49.7 & 1.02 & 44.4 & \highlight{\textbf{0.90}} & \highlight{\textbf{13.99}} & \highlight{0.93} & \highlight{14.11} \\
			Task 3 & 7 & TO & 4799.7 & 1.55 & 47.3 & 1.41 & 49.8 & 1.21 & 49.1 & 1.47 & 49.4 & 1.17 & 49.5 & 1.40 & 46.3 & \highlight{\textbf{1.10}} & \highlight{\textbf{14.01}} & \highlight{1.15} & \highlight{14.13} \\
			Task 4 & 8 & TO & 7801.1 & 4.13 & 52.0 & 4.15 & 77.6 & 3.55 & 49.8 & 3.11 & 53.4 & 3.01 & 53.9 & 2.82 & 53.5 & \highlight{2.67} & \highlight{17.73} & \highlight{\textbf{2.46}} & \highlight{\textbf{14.19}} \\
			Task 5 & 9 & 5029.83 & OOM & 8.73 & 50.4 & 13.61 & 533.4 & 5.86 & 47.7 & 1.82 & 53.2 & 1.73 & 51.3 & 1.77 & 49.2 & \highlight{3.88} & \highlight{71.58} & \highlight{\textbf{1.43}} & \highlight{\textbf{14.20}} \\
			Task 6 & 10 & 6236.34 & OOM & 63.46 & 51.2 & 111.11 & 5005.9 & 3.28 & 49.0 & 3.10 & 56.1 & 2.74 & 55.9 & 2.91 & 51.7 & \highlight{3.10} & \highlight{46.10} & \highlight{\textbf{1.91}} & \highlight{\textbf{14.26}} \\
			Task 7 & 11 & 6083.55 & OOM & 1223.93 & 54.1 & 219.19 & OOM & 416.44 & 51.3 & 5.62 & 57.2 & 5.73 & 58.5 & 5.41 & 53.3 & \highlight{49.28} & \highlight{1129.74} & \highlight{\textbf{4.74}} & \highlight{\textbf{14.68}} \\
			Task 8 & 12 & 5392.32 & OOM & TO & 53.1 & 188.82 & OOM & 607.34 & 54.5 & 7.30 & 57.4 & 6.55 & 58.3 & 6.73 & 54.2 & \highlight{336.55} & \highlight{7106.95} & \highlight{\textbf{5.83}} & \highlight{\textbf{15.64}} \\
			Task 9 & 13 & 3580.38 & OOM & TO & 52.7 & 209.74 & OOM & TO & 53.8 & 9.09 & 58.5 & 8.24 & 59.0 & 7.89 & 54.5 & \highlight{OOM} & \highlight{OOM} & \highlight{\textbf{6.71}} & \highlight{\textbf{19.83}} \\
			Task 10 & 14 & 307.64 & OOM & TO & 54.5 & 163.66 & OOM & 1498.79 & 53.2 & 12.30 & 63.0 & 9.34 & 58.7 & 10.82 & 54.8 & \highlight{OOM} & \highlight{OOM} & \highlight{\textbf{7.93}} & \highlight{\textbf{16.37}} \\
			Task 11 & 15 & 8.03 & OOM & TO & 53.7 & 170.63 & OOM & TO & 52.5 & 15.44 & 71.7 & 10.65 & 66.9 & 11.93 & 61.8 & \highlight{245.85} & \highlight{5152.39} & \highlight{\textbf{7.03}} & \highlight{\textbf{17.97}} \\
			Task 12 & 16 & 8.03 & OOM & TO & 53.8 & 166.62 & OOM & TO & 52.8 & 23.81 & 102.6 & 23.77 & 103.1 & 15.39 & \textbf{74.7} & \highlight{OOM} & \highlight{OOM} & \highlight{\textbf{12.12}} & \highlight{84.70} \\
			Task 13 & 17 & 8.68 & OOM & TO & 54.3 & 169.90 & OOM & TO & 54.2 & 58.06 & 155.6 & 43.44 & 150.2 & 26.24 & 100.8 & \highlight{OOM} & \highlight{OOM} & \highlight{\textbf{13.91}} & \highlight{\textbf{99.63}} \\
			Task 14 & 18 & 9.03 & OOM & TO & 54.6 & 142.23 & OOM & TO & 54.4 & 89.31 & 261.5 & 72.53 & 254.2 & 39.20 & \textbf{152.2} & \highlight{OOM} & \highlight{OOM} & \highlight{\textbf{15.56}} & \highlight{153.83} \\
			Task 15 & 19 & 9.44 & OOM & TO & 52.6 & 143.92 & OOM & TO & 53.4 & 169.83 & 480.5 & 97.75 & 420.4 & 72.49 & 258.1 & \highlight{OOM} & \highlight{OOM} & \highlight{\textbf{13.28}} & \highlight{\textbf{125.81}} \\
			Task 16 & 20 & 10.05 & OOM & TO & 53.3 & 126.19 & OOM & 5.31 & \textbf{53.8} & 374.19 & 886.2 & \textbf{5.15} & 63.2 & 101.36 & 438.1 & \highlight{OOM} & \highlight{OOM} & \highlight{51.52} & \highlight{862.39} \\
			Task 17 & 21 & 10.24 & OOM & TO & 51.4 & 133.70 & OOM & TO & 53.3 & 615.63 & 1735.3 & 615.73 & 1736.0 & 279.37 & \textbf{874.8} & \highlight{OOM} & \highlight{OOM} & \highlight{\textbf{238.07}} & \highlight{2472.01} \\
			Task 18 & 22 & 11.09 & OOM & TO & 52.4 & 131.46 & OOM & 6.68 & \textbf{52.7} & 1775.63 & 3286.0 & \textbf{6.15} & 62.5 & 445.23 & 1593.1 & \highlight{OOM} & \highlight{OOM} & \highlight{213.80} & \highlight{3666.35} \\
			Task 19 & 23 & 12.44 & OOM & TO & 54.2 & 137.77 & OOM & TO & 53.2 & 3849.99 & 6704.4 & 3461.34 & 6681.6 & 1145.22 & \textbf{3140.4} & \highlight{OOM} & \highlight{OOM} & \highlight{\textbf{650.49}} & \highlight{7408.19} \\
			Task 20 & 24 & 13.13 & OOM & TO & 54.1 & 141.69 & OOM & TO & 54.7 & 3630.59 & OOM & 1851.33 & OOM & \textbf{2515.80} & \textbf{6129.6} & \highlight{OOM} & \highlight{OOM} & \highlight{OOM} & \highlight{OOM} \\
			\bottomrule
		\end{tabular}
	}
	\vspace{-3pt}
\end{table*}

\highlight{Table~\ref{tab:all_tasks} reports the execution time and peak memory
consumption of all nine configurations. We analyze the results in terms
of the necessity of decoupled synthesis, the effectiveness of CPA after
decoupling, and the time--memory trade-offs among the search strategies.}

\textbf{\highlight{1) Necessity of Decoupled Synthesis.}}
\highlight{The monolithic Pure-SMT baseline jointly encodes entity
	permutations and geometric constraints in a single formula and fails on
	all evaluated tasks, timing out for $N\leq 8$ and reaching the memory
	limit for $N\geq 9$. In contrast, after separating declaration-order
	search from SMT-based geometric reasoning, even DFS-None and BFS-None
	complete several small tasks within seconds. This contrast demonstrates
	that decoupling the discrete and geometric parts is essential for making
	the synthesis problem tractable. However, the unpruned decoupled
	configurations still fail as the number of entities grows, indicating
	that decoupling alone does not resolve the combinatorial search space.}

\textbf{\highlight{2) Effectiveness of CPA after Decoupling.}}
\highlight{Within the decoupled architecture, the main bottleneck
	is the repeated exploration of declaration orders. DFS-None
	times out from Task~8 onward, and BFS-None reaches the memory limit from
	Task~7 onward. Pure A* changes the expansion order and is more
	competitive on smaller tasks, but it still retains declaration orders
	and reaches the memory limit on most tasks from
	$N=13$ onward. CPA directly addresses this redundancy. DFS+B\&B remains
	sensitive to branch order and times out on many tasks, whereas
	DFS+B\&B-CPA completes every task through $N=23$. The A* comparison
	shows the same effect: at $N=12$, A*+CPA reduces execution time from
	336.55\,s to 5.83\,s and peak memory from 7,106.95\,MB to 15.64\,MB, and
	also completes every task through $N=23$. These results show that CPA
	effectively removes declaration-order redundancy across all three
	search orders: DFS, BFS, and A*.}

\textbf{\highlight{3) Time--Memory Trade-offs.}}
\highlight{DFS+B\&B-CPA can be exceptionally fast when depth-first
	search finds a low-cost complete program early, as on $N=20$ and
	$N=22$, where it finishes in 5.15\,s and 6.15\,s, respectively. BFS-CPA
	avoids a persistent global prefix table by releasing states layer by
	layer and is the only configuration that completes $N=24$ within the
	memory limit. A*+CPA also achieves strong runtime performance: it is the
	fastest completed configuration for $N=16$--$19$ and $N=23$, and it is
	consistently faster than BFS-CPA from $N=16$ through $N=23$. This speed
	comes with higher memory consumption at the largest scales. From
	$N=20$ to $N=23$, A*+CPA uses approximately two to three times the peak
	memory of BFS-CPA; at $N=23$, the two configurations consume
	7,408.19\,MB and 3,140.40\,MB, respectively. A*+CPA then reaches the
	memory limit at $N=24$, whereas BFS-CPA completes the task. These results
	show that A*+CPA often favors execution time, BFS-CPA provides stronger
	memory efficiency at the largest scales, and DFS+B\&B-CPA benefits most
	when an effective upper bound is established early.}
\highlight{To further investigate how the distribution of complete-program
	costs in the search space relates to search performance, we conduct a
	supplementary analysis (Appendix~\ref{app:score_distribution}). Since
	\methodname\ searches for a low-cost program, we summarize each scene's
	cost distribution by its mean program cost, where a lower mean means that more
	low-cost programs exist. The runtime of DFS+B\&B-CPA varies
	widely across scenes and is markedly shorter on scenes whose cost
	distribution admits more low-cost programs, whereas A*+CPA and BFS-CPA
	show no clear trend with the mean cost.}

\vspace{1ex}
\noindent\fbox{%
	\parbox{\dimexpr\linewidth-2\fboxsep-2\fboxrule\relax}{%
\highlight{\textbf{Answer to RQ2:}  Pure-SMT confirms the necessity of decoupling, while the
	results across DFS, BFS, and A* show that CPA is critical for removing
	declaration-order redundancy. DFS+B\&B-CPA benefits from early upper
	bounds, BFS-CPA provides the strongest memory scalability and is the
	only configuration that completes $N=24$, and A*+CPA is competitive in
	runtime but consumes more memory at larger scales.}
	}%
}

\vspace{2pt}
\subsection{Threats to Validity}
\highlight{The RQ1 comparison is subject to differences in task
	formulation and input modality. ChatScene and ScenicNL were designed to
	translate natural-language descriptions into Scenic, whereas
	\methodname\ synthesizes a relational program from a structured concrete
	scene. Providing all methods with the same quantitative spatial facts
	enables comparison of their generated programs, but does not reproduce
	the native end-to-end use case of each system. In addition, program cost
	measures the preference for relational descriptions defined by our cost
	function; it is not a general measure of program quality or language
	understanding. Therefore, the RQ1 results should be interpreted as an
	output-level comparison of syntax validity, consistency with the supplied
	scene, and relational abstraction under the evaluated inputs.}

	%
	

\vspace{-3ex}
\highlight{\section{Discussion}
	\label{sec:discussion}}
\highlight{\subsection{Practical Implications and Benefits}}

\highlight{The practical benefit of \methodname\ is that it transforms a
	structured concrete scene into a relational and re-samplable Scenic
	program. Replaying absolute coordinates reproduces only one concrete
	scene, whereas the synthesized program can generate multiple scenario
	variations while preserving the spatial relations encoded from the input
	scene. The input scene itself remains encompassed by the program, while
	relational primitives and \texttt{Range} expressions allow controlled
	variations in entity positions and headings. RQ2 further shows that such
	programs can be synthesized for the evaluated scenes containing up to 24
	entities.}
	
	\highlight{These scenario variations can support synthetic data
		generation and the targeted expansion of rare or difficult scenes.
		Johnson-Roberson et al.~\cite{johnson2017driving} showed that
		synthetic images generated in a virtual world could train a vehicle
		detector that transferred to real-world images, and Fremont
		et al.~\cite{Scenic} showed that Scenic-generated partial-occlusion
		scenes improved detector performance on difficult cases when added to
		training data. However, the Scenic programs used by Fremont et al.
		were hand-authored, while some existing Scenic generation methods
		start from natural-language descriptions. \methodname\ takes a
		different path: it synthesizes relational programs directly from
		real-world structured scenes. Our preliminary evaluation
		(Appendix~\ref{app:vlm_frontend}) further shows that these structured
		scenes can potentially be obtained automatically from unlabeled images.
		For users, this means providing a single real-world image
		suffices; no manual program writing or spatial description is needed.}
		
\vspace{-2pt}	
\highlight{\subsection{Integration with Perception, VLMs, and NL-based Methods}}

\highlight{We further consider how \methodname\ can be integrated with
	perception systems, VLMs, and NL-based methods, and preliminarily test
	the feasibility of the image-based frontend. In
	such an integrated workflow, a detector or VLM extracts the
	entity types, positions, and headings from images or videos, forming the
	structured concrete scene required by \methodname. 
	\methodname\ then uses the scene as the synthesis
		specification, applying SMT to check candidate spatial relations and
		synthesize their numerical ranges while searching for a low-cost
		relational Scenic program.}

\highlight{NL-based methods can also assist the symbolic search after a
	structured concrete scene is available. Pairwise position and heading
	relations can be derived from its labels and expressed as stylized NL,
	from which an LLM can propose a declaration order and candidate reference
	relations as a high-level sketch. Such guidance may be particularly
	useful for DFS, because a promising declaration order may
	produce a low-cost complete program earlier, tighten the global upper
	bound, and enable stronger pruning. Our preliminary experiment below
	evaluates the image-based frontend; implementing and evaluating this
	NL-guided search strategy remains future work.}

\highlight{The preliminary image-to-structured-scene evaluation covers
		50 images from the nuScenes validation split; the complete protocol and
		results are reported in Appendix~\ref{app:vlm_frontend}. Both GPT-5.5
		and Claude opus-4.8 show limited accuracy in recovering vehicle
		positions and headings. The domain-aligned PGD
		detector~\cite{wang2022probabilistic} trained on nuScenes performs
		substantially better, while the same architecture trained on KITTI
		degrades sharply, confirming that the advantage of a perception model
		depends critically on domain alignment. These results indicate that a
		domain-aligned perception model is the most promising source of the
		structured scene required by \methodname, while current general-purpose
		VLMs are not yet accurate enough for direct use.}
	

\section{Related Work}
While many computer vision approaches prioritize \textit{visual realism} (\textit{e.g.}, textures), we focus on \textit{structural realism}, which ensures physical plausibility for real-world scenarios. Consequently, appearance-centric generation is beyond our scope. We categorize relevant literature into: (1) Data-Driven Scenario Generation, (2) LLM-based Scenario Generation, and (3) Sketch-based Program Synthesis.

\textbf{Data-Driven Scenario Generation}. 
Recent data-driven works leverage neural architectures like VAEs \cite{aevs}, GANs \cite{GAN}, Diffusion Models \cite{Diffusion}, and Autoregressive models to learn spatial distributions. These methods typically synthesize scenes through four intermediate representations \cite{Survey}: explicit parameters \cite{TrafficGen,SceneControl,DiffuScene,SceneGen}, scene graphs \cite{MetaSim,MetaSim2}, implicit latent spaces \cite{XCube,Director3D}, and semantic layouts \cite{CityDreamer,Layout2Scene}. 
While representations like scene graphs and explicit parameters successfully preserve the high-level semantics of scenario topologies (\textit{e.g.}, categorical relationships and rough spatial layouts), they inherently leave out low-level geometric precision. In contrast, our scenario program-based approach perfectly bridges this gap, enabling mathematically precise editability over scenario structure. 
Furthermore, unlike data-driven methods that risk generating physically invalid scenarios, our symbolic paradigm utilizes SMT as a rigorous constraint solver to ensure physical and semantic validity. 
This precise constraint capability also enables high-fidelity real-world scene reconstruction, such as traffic accident restoration, aiding accident analysis and safety validation.

\textbf{LLM-based Scenario Generation.}
Leveraging the prior knowledge of LLMs, recent approaches broadly follow two paradigms. The first generates intermediate structural representations like scene graphs \cite{AnyHome, LLplace, GraphDreamer, DIScene}. To achieve more precise editability, the second paradigm employs LLMs for procedural code generation. Some works generate Python-like scripts \cite{3DGPT, SceneCraft, TheSceneLanguage}, while others synthesize domain-specific programs \cite{OpenUniverseIndoorSceneGeneration, ProceduralScenePrograms}. 
Notably, works like ChatScene \cite{ChatScene} and ScenicNL \cite{scenicNL}  target Scenic. While they excel at generating diverse scenarios with dynamic behaviors from open-vocabulary inputs, they struggle to capture accurate spatial relationships in static scenes. Consequently, they easily produce physically impossible configurations, causing critical issues for downstream tasks. 
In contrast, our approach grounds the synthesis process directly in real-world scene data. By leveraging an SMT-based symbolic framework, we rigorously extract and encode precise geometric relationships from physical environments into scenario programs, thereby guaranteeing structural realism.

\textbf{Sketch-based Program Synthesis}.
Sketch-based synthesis is a classic paradigm pioneered by Solar-Lezama et al. \cite{solar2009sketching}, which allows programmers to write partial programs with "\textit{holes}" to reduce the search space. Recent works advance both the automated generation and efficient solving of sketches \cite{chen2023data,zhang2025multi,chen2020multi,wang2017program}. Specifically, generation methods extract sketches from natural language or input-output examples \cite{chen2023data,zhang2025multi,chen2020multi}, while solving techniques predominantly employ enumerative search to instantiate holes \cite{gulwani2017program, li2024guiding, lee2021combining}. To improve efficiency, abstraction techniques are often leveraged to verify intermediate states and prune infeasible spaces \cite{wang2017program,zhang2025multi,guo2019program,liang2011scaling,wang2018learning}.
In contrast, our approach directly generates template-based sketches using semantic information from \highlight{real world scenes}. For solving, we combine enumerative search \cite{barke2020just} with SMT solving: the search phase instantiates discrete high-level spatial relations, while the SMT solver verifies their geometric feasibility and synthesizes the corresponding continuous numerical parameters, significantly accelerating the overall process.

	%

\vspace{-6pt}
\section{Conclusion and Future Work}
In this paper, we presented \methodname\ to address the sim-to-real gap in scenario generation. Instead of relying on monolithic constraint solving (which we showed to be intractable even for simple cases), we demonstrated that a decoupled search-and-solve architecture can efficiently navigate the vast permutation space of spatial relationships. By pruning the search tree, our approach successfully scales to highly dense environments with up to 24 entities.

To scale to even larger scenes, future work will investigate more advanced
heuristic pruning techniques. Furthermore, we plan to extend our framework's
expressiveness by supporting more complex syntactic constructs and
incorporating dynamic agent behaviors. \highlight{Finally, we plan to
integrate perception models and natural-language-guided search toward an
end-to-end pipeline from images to Scenic programs.}

\section*{Data Availability Statement}
The artifact associated with this paper is available at \url{https://doi.org/10.5281/zenodo.22807615}.


\section*{Acknowledgments}
This research was supported by National Key R\&D Program of China (No. 2024YFF0908003) and the NSFC Program (No. 62172429). We also thank the anonymous reviewers for their valuable feedback.
	
\vspace{-1ex}
\appendix
\highlight{\section{Preliminary Image-to-Structured-Scene Evaluation}
	\label{app:vlm_frontend}}

\begin{table*}[h]
	\centering
	\small
	\caption{\highlight{Vehicle detection and pose estimation on 50 nuScenes
			validation images (average 6.0 vehicles per image within 50\,m). TP,
			FP, and FN are per-image averages; position and heading errors are
			medians over matched pairs.}}
	\vspace{-5pt}
	\label{tab:image_vehicle_pose}
	\resizebox{\textwidth}{!}{
		\begin{tabular}{lrrrrrrrr}
			\toprule
			\textbf{Method} & \textbf{Precision} & \textbf{Recall} &
			\textbf{F1} & \textbf{TP/img} & \textbf{FP/img} & \textbf{FN/img} &
			\textbf{Pos.\,Err.(m)} & \textbf{Head\,Err.($^\circ$)} \\
			\midrule
			GPT-5.5          & 0.359 & 0.404 & 0.362 & 2.0 & 4.0 & 4.1 & 2.59 & 32.3 \\
			Claude opus-4.8  & 0.260 & 0.258 & 0.236 & 1.2 & 3.7 & 4.9 & 3.12 & 50.1 \\
			PGD-nuScenes     & 0.832 & 0.449 & 0.548 & 2.5 & 0.4 & 3.6 & 0.89 &  4.5 \\
			PGD-KITTI        & 0.089 & 0.184 & 0.101 & 0.6 & 6.0 & 5.5 & 3.07 & 96.9 \\
			\bottomrule
	\end{tabular}}
	\vspace{-5pt}
\end{table*}

\highlight{R2SGen takes a structured concrete scene as input. This
	experiment asks whether current vision-language models (VLMs) and
	perception models can produce such a scene directly from images, \textit{i.e.},
	whether the upstream frontend of the R2SGen pipeline can be automated.
	Given an image, each frontend must detect all vehicles within 50\,m of
	the ego vehicle and estimate their $(x, y)$ positions and headings in
	an ego-centric frame ($+x$ right, $+y$ forward, heading
	counter-clockwise from $+y$). We use 50 images from the nuScenes validation split,
	containing on average 6.0 vehicles per image within 50\,m. We use the
	validation split to ensure that PGD-nuScenes was not trained on any of
	the evaluated images.}

\highlight{Predicted and ground-truth vehicles are paired by optimal
	one-to-one assignment (the Hungarian algorithm \cite{hungarian1955,hungarian1957}): each prediction is
	matched to at most one ground-truth vehicle so as to minimize the total
	matching distance, and pairs farther than 5\,m are rejected. A
	\textbf{true positive (TP)} is a prediction correctly matched to a
	ground-truth vehicle within 5\,m; a \textbf{false positive (FP)} is an
	unmatched prediction (a hallucinated vehicle that does not exist in the
	scene); a \textbf{false negative (FN)} is an unmatched ground-truth
	vehicle (a real vehicle that was missed).
	\textbf{Precision} $=\mathrm{TP}/(\mathrm{TP}+\mathrm{FP})$ is the
	fraction of predictions that are correct; \textbf{recall}
	$=\mathrm{TP}/(\mathrm{TP}+\mathrm{FN})$ is the fraction of ground-truth
	vehicles that are detected; \textbf{F1}
	$=2\, \times \mathrm{TP}/(2\, \times \mathrm{TP}+\mathrm{FP}+\mathrm{FN})$ combines both
	into a single score from 0 to 1 that penalizes hallucinations and missed
	detections equally. For matched pairs, we report the median
	\textbf{position error} (Euclidean distance, in meters) and the median
	\textbf{heading error} (circular angular difference, in degrees).}

\highlight{We evaluate four configurations. GPT-5.5 and Claude opus-4.8
	are two general-purpose VLMs, each given the same task-specific prompt
	(defining the ego-centric frame and requesting vehicle positions and
	headings within 50\,m); each VLM is run three times per image, and the
	per-image metrics are averaged across the three runs. The other
	two use the Probabilistic and Geometric Depth (PGD) monocular 3D
	detector~\cite{wang2022probabilistic}: PGD-nuScenes uses a checkpoint
	trained on nuScenes (the same domain as the test images), while
	PGD-KITTI uses the same architecture trained on KITTI and applied to
	nuScenes images without adaptation. Both PGD configurations use a
	confidence threshold of 0.1.}

\highlight{Table~\ref{tab:image_vehicle_pose} shows that PGD-nuScenes
	achieves the best performance across all metrics, with the highest F1
	(0.548) and precision (0.832) and the lowest position and heading
	errors (0.89\,m and $4.5^\circ$). We note that while PGD-nuScenes
	achieves high precision (0.832), its recall remains moderate (0.449);
	a further analysis shows that recall drops to only 0.21
	for ground-truth vehicles beyond 30\,m, reflecting the fundamental
	difficulty of monocular depth estimation at long range. The two VLMs
	are substantially worse:
	GPT-5.5 (F1 $= 0.362$, position error $2.59$\,m, heading error
	$32.3^\circ$) outperforms Claude opus-4.8 (F1 $= 0.236$, $3.12$\,m,
	$50.1^\circ$). PGD-KITTI, using the same architecture trained on a
	different domain, collapses to F1 $= 0.101$ with a heading error of
	$96.9^\circ$, showing that the advantage of a perception model depends
	on domain alignment.}

\highlight{These results suggest that a domain-aligned perception model is
	the most promising source of structured scenes for \methodname, while
	current general-purpose VLMs are not yet accurate enough for direct
	use.}

\highlight{\section{Program-Cost Distribution and Search Behavior}
	\label{app:score_distribution}}

\highlight{We investigate how the distribution of complete-program
	costs in the search space relates to the performance of DFS+B\&B-CPA,
	BFS-CPA, and A*+CPA. We use 100 controlled scenes, each containing 18
	non-ego vehicles and a fixed ego vehicle, that cover a broad range of
	program-cost distributions. To summarize each scene's cost distribution,
	we use its mean program cost---a lower mean indicates that more low-cost
	programs exist.}

\highlight{Since the cost of declaring an entity depends only on the set
	of entities already declared (the property exploited by our Common Prefix
	Abstraction pruning), we can use a dynamic program to obtain the number
	of declaration orders at each total cost without enumerating all $N!$
	complete orders.}

\begin{figure*}[t]
	\centering
	\includegraphics[width=0.98\textwidth]
	{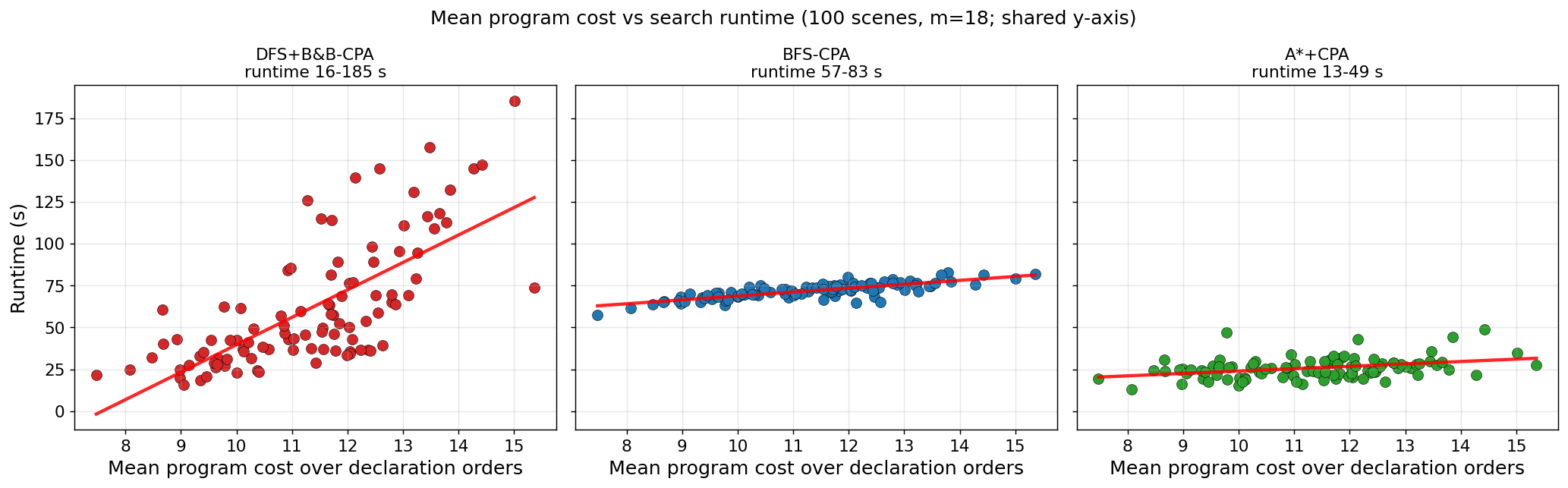}
	\caption{\highlight{Mean program cost versus search runtime over 100
			controlled scenes ($N=18$). The three panels show DFS+B\&B-CPA,
			BFS-CPA, and A*+CPA; each title reports the runtime range.}}
	\label{fig:score_distribution_scatter}
\end{figure*}

\highlight{We evaluate DFS+B\&B-CPA, BFS-CPA, and A*+CPA on each of the
	100 scenes.}
\highlight{DFS+B\&B-CPA's runtime varies widely across the 100 scenes,
	ranging from 16 to 185\,s, and tends to increase with the mean cost.
	We speculate that scenes admitting more low-cost programs let the
	search reach a low-cost complete program earlier and thereby tighten
	the branch-and-bound upper bound sooner. Neither A*+CPA
	($13$--$49$\,s) nor BFS-CPA ($57$--$83$\,s) shows a clear dependence
	on the mean cost.}
	
\vspace{1cm}

	\bibliographystyle{ACM-Reference-Format}
	
	\bibliography{sample-base}

\end{document}